\theoremstyle{plain}
\newtheorem{thm}{Theorem}
\theoremstyle{definition}
\newtheorem{defn}{Definition}
\theoremstyle{remark}
\theoremstyle{definition}
\theoremstyle{plain}
\newtheorem{Lemma}{Lemma}
\theoremstyle{plain}
\newtheorem{Claim}{Claim}
\theoremstyle{definition}
\newtheorem{Sketch of proof}{Sketch of proof}
\begin{document}

\begin{frontmatter}

\title{Power Spectrum Identification for Quantum Linear Systems}

\author[*]{Matthew Levitt}
\ead{pmxml2@nottingham.ac.uk}
\author[*]{M\u{a}d\u{a}lin Gu\c{t}\u{a}}
\author[**]{Hendra I. Nurdin}
\address[*]{School of Mathematical Sciences, University of Nottingham, University Park, NG7 2RD Nottingham, UK}
\address[**]{School of Electrical Engineering and Telecommunications, UNSW Australia, Sydney NSW 2052, Australia}


\begin{abstract}
In this paper we investigate system identification for general quantum linear systems. We consider the situation where the input field  is prepared as stationary (squeezed) quantum noise. In this regime the output field is characterised by the power spectrum, which encodes covariance of the output state. We address the following two questions: (1) Which parameters can be identified from the power spectrum?  (2) How to construct a system realisation from the power spectrum? The power spectrum depends on the system parameters via the transfer function. We show that the transfer function can be uniquely recovered from the power spectrum, so that equivalent systems are related by a symplectic transformation. 
\end{abstract}

\begin{keyword}
Quantum linear systems, System Identification, Power spectrum, Global minimality, Realization theory, Transfer function, Quantum input-output models.
\end{keyword}

\end{frontmatter}

\section{Introduction}

 \begin{figure}[h]
\centering
\includegraphics[scale=0.30]{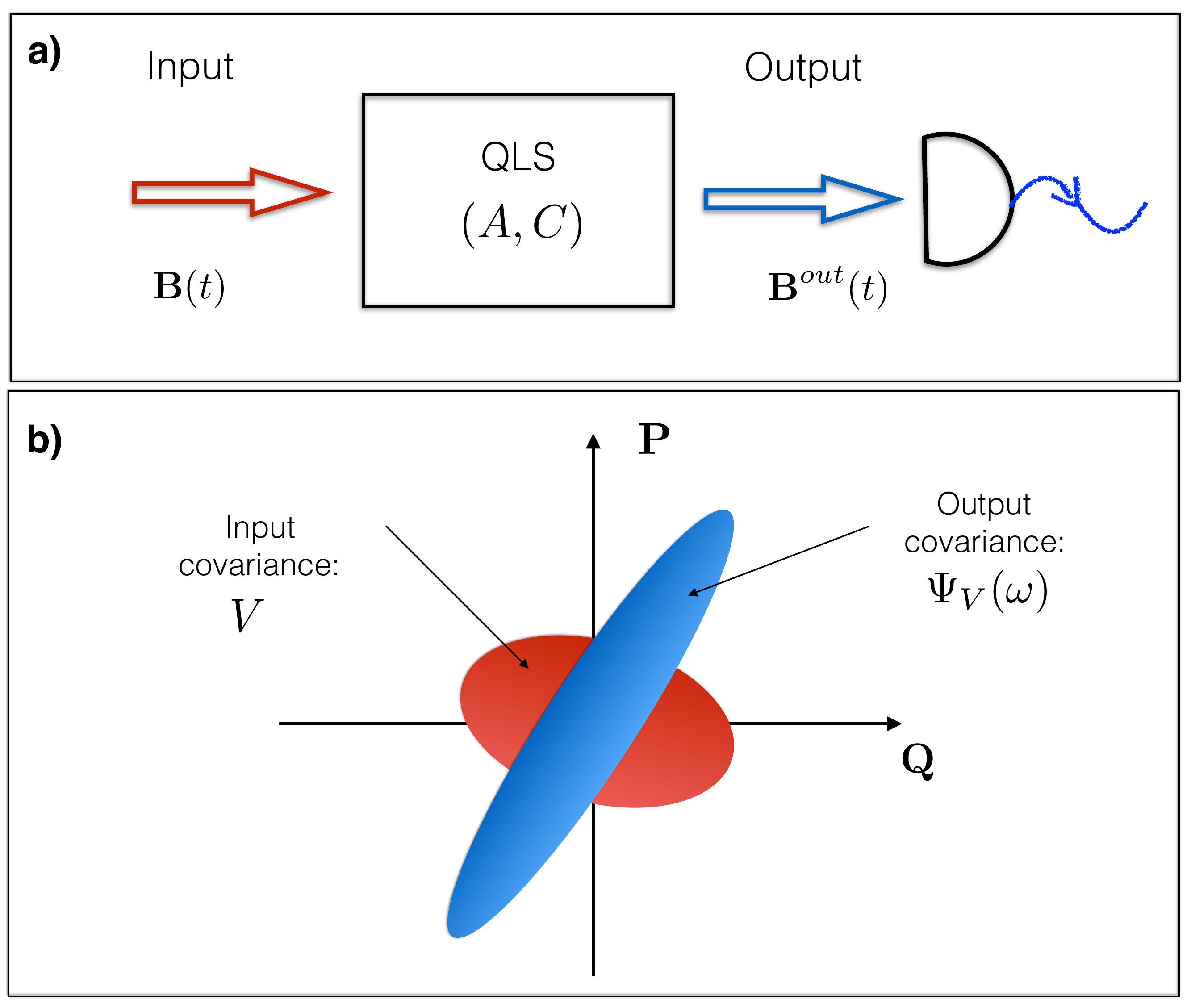}
\caption{a) System identification problem: find parameters $(A,C)$ of a linear input-output system by measuring output. b) Stationary scenario: \textit{power spectrum} describes output covariance in frequency domain.\label{twoapproach}}
\end{figure}

System identification theory \cite{Ljung,Pintelon&Schoukens, GUTA3, GUTA4} lies at the interface between control theory and statistical inference, and deals with the estimation of unknown parameters of dynamical systems and processes from input-output data. 
The integration of control and identification techniques plays an important role e.g. in adaptive control \cite{Astrom}.

In this paper we consider system identification for \textit{quantum  linear systems} (QLSs).  QLSs are a class of models used in quantum optics, opto-mechanical systems, electro-dynamical  systems,  cavity QED systems  and elsewhere  \cite{Naoki, wall, Tian, GZ, STOCK, DOHERTY}. They have many applications, such as quantum memories, entanglement generation, quantum information processing and quantum control \cite{memory, memory2, JNP, nurdin2009, MILB, LB2, Dong}. The  framework required to describe these is the celebrated \textit{quantum stochastic calculus} \cite{path2}. 

Quantum linear systems are examples of input-output models (see Figure \ref{twoapproach}). Typically, one has access to the field and is able to prepare a time-dependent input. After the coupling, the parameters of the system (Ôblack-boxÕ) are imprinted on the output.
In a nutshell, the system identification problem is to estimate dynamical parameters from the output  data, obtained by performing measurements on the output fields. The identification of linear systems is by now a well developed subject in `classical' systems theory \cite{glover, kalman, Ljung, Ho, anders, Youla, zhou, Pintelon&Schoukens, Davies}, but has not been fully explored in the quantum domain \cite{Guta, Guta2}.  We distinguish two contrasting approaches to the identification of QLSs 

In the first approach, one probes the system with a known \emph{time-dependent} input signal (e.g. coherent state), then uses the output measurement data to compute an estimator of the dynamical parameter(s). In this setting  the \textit{transfer function} entirely encapsulates  the systems input-output behaviour. Therefore, the basic identifiability problem is to find the class of systems with the same transfer function. This problem has been addressed, firstly for the special class of \textit{passive} QLSs in \cite{Guta} and then for general QLSs in \cite{levitt}. In particular, it was seen that  \textit{minimal} systems with the same transfer function are related by symplectic transformations on the space of system modes. 

The second approach and the one we consider here is to probe the systems with \textit{time-stationary} pure gaussian states with independent increments (see Figure  \ref{twoapproach}), i.e., squeezed vacuum noise. If the system is minimal and Hurwitz stable, the dynamics exhibits an initial transience period after which it reaches stationarity and the output is in a stationary Gaussian state, whose covariance in the frequency domain is  given by the \textit{power spectrum}. 
The power spectrum depends quadratically on the transfer function, so the parameters which are identifiable in the  stationary scenario will also be identifiable in the time-dependent one. Our goal is to understand to what extent the converse is also true. 
This problem is of the type: `for a square rational matrix $V(s)$, where $s\in\mathbb{C}$ find rational matrix $W(s)$ such that 
$$V(s)=W(s)W(-\overline{s})^{\dag}$$ for all $s\in\mathbb{C}$, which in the classical literature is called the \textit{spectral factorisation problem} \cite{anders}.
Note that our previous work \cite{levitt} looked at this problem for a generic class of single input single output (SISO) QLSs. Now, for a given minimal system there may exist lower dimensional systems with the same power spectrum. To understand this, consider the system's stationary state and note 
that it can be uniquely written as a tensor product between a pure and a mixed Gaussian state 
(cf. the symplectic decomposition \cite{WOLF}. It is known \cite{levitt}   that by restricting the system to the mixed component leaves the power spectrum unchanged. Conversely, if the stationary state is fully mixed, there exists no smaller dimensional system with the same power spectrum. Such systems will be called \emph{globally minimal}, and can be seen as the analogue of minimal systems for the stationary setting.

The main result here is to show that under global minimality the power spectrum determines the transfer function, and therefore the equivalence classes are the same as those in the transfer function. It is interesting to note that this equivalence is a consequence of the unitarity and purity of the input state, and does not hold for a generic classical linear system \cite{anders, glover}. 
The key to our proof is in reducing the power spectrum identifiability problem to an equivalent  transfer function identifiability problem. 

This paper is organised as follows: In Section \ref{red1} we  review the setup of input-output QLS, and their associated transfer function. In Section \ref{section3} we outline the power spectrum identifiability problem. We introduce the notion of global minimality for systems with minimal dimension for a given power spectrum and review recent important results. 
Our main identifiability result is presented in  Section \ref{section4}, cf., Theorem \ref{main}. Finally, we outline a method to construct a globally minimal system realisation  from the power spectrum.

\subsection{Preliminaries and notation}

We use the following notations: 
For a matrix $X=(X_{ij})$ the following symbols:
$\overline{X}=(X_{ij}^*)$,  $X^{T}=(X_{ji})$, $X^{\dag}=(X_{ji}^*)$
represent the complex conjugation, transpose and adjoint matrix respectively, where `*' indicates complex conjugation. We also use the `doubled-up notation' $\breve{X}:=\left[X^T, \overline{X}^T\right]^T$ and $\Delta(A, B):=\left[A, B; \overline{B}, \overline{A}\right]$. 
For a matrix $Z\in\mathbb{R}^{2n\times2m}$ define $Z^\flat=J_mZ^{\dag}J_n$, where $J_n=\left[\mathds{1}_n, 0; 0, -\mathds{1}\right]$.  A similar notation is used for matrices of operators. 
We use `$\mathds{1}$' to represent the identity matrix or operator. $\delta_{jk}$ is Kronecker delta and $\delta(t)$ is Dirac delta.
The commutator is denoted by $[\cdot, \cdot]$.

\begin{defn}\label{def.symplectic}
A matrix $S\in\mathbb{C}^{2m\times 2m}$ is said to be \textit{$\flat$-unitary} if it is invertible and satisfies
\[S^\flat S=SS^\flat =\mathds{1}_{2m}.\]

If additionally, $S$ is of the form $S=\Delta(S_-, S_+)$ for some $S_-, S_+\in\mathbb{C}^{m\times m}$ then we say that it is \textit{symplectic}. \end{defn}

\section{Quantum Linear Systems}\label{red1}

In this section we briefly review the QLS theory, highlighting along 
the way results that will be relevant for this paper. 
We refer to \cite{GZ} for a more detailed discussion on the input-output 
formalism, and  to the review papers \cite{peter, path, path2, nurdin} 
for the theory of linear systems.

\subsection{Time-domain representation}\label{timed}

A linear input-output quantum system is defined as a continuous 
variables (cv) system coupled to a Bosonic environment, such that their joint 
evolution is linear in all canonical variables. 
The system is described by the column vector of annihilation operators, 
$\mathbf{a}:=[\mathbf{a}_1,\mathbf{a}_2, \dots, \mathbf{a}_n]^T$, 
representing the $n$ cv modes. 
Together with their respective creation operators 
$\mathbf{a}^{*}:=[\mathbf{a}^{*}_1,\mathbf{a}^{*}_2, \dots, \mathbf{a}^{*}_n]^T$ 
they satisfy the canonical commutation relations (CCR)
$
\left[\mathbf{a}_i, \mathbf{a}^{*}_j\right]=\delta_{i j}\mathds{1}.
$
We denote by $\mathcal{H}:= L^2(\mathbb{R}^n)$ the Hilbert space of the system carrying the standard representation of the 
$n$ modes. The environment is modelled by $m$ bosonic fields, called \textit{input channels}, 
whose fundamental variables are the fields 
$\mathbf{B}(t):=\left[\mathbf{B}_{1}(t), \mathbf{B}_{2}(t), \ldots, 
\mathbf{B}_{m}(t)\right]^T$, where $t\in \mathbb{R}$ represents time. 
The fields satisfy the CCR 
\begin{eqnarray}
\left[\mathbf{B}_{i}(t), \mathbf{B}^*_{j}(s)\right]=\mathrm{min}\{t,s\}\delta_{ij}\mathds{1}.
\end{eqnarray}
Equivalently, this can be written as 
$\left[\mathbf{b}_{i}(t),\mathbf{b}_{j}^*(s)\right]
=\delta(t-s)\delta_{ij}\mathds{1}$, where 
$\mathbf{b}_{i}(t)$ are the infinitesimal (white noise) annihilation operators formally defined as 
$\mathbf{b}_{i}(t):=\mathrm{d}\mathbf{B}_{i}(t)/\mathrm{d}t$ \cite{peter}. 
The operators can be defined in a standard fashion on the Fock space 
$\mathcal{F}= \mathcal{F}(L^2(\mathbb{R})\otimes \mathbb{C}^m)$ \cite{LB}. We consider the scenario where the input is prepared in a \emph{pure, stationary in time, mean-zero, Gaussian state} with independent increments characterised by the covariance matrix 
\begin{align}\label{Ito}
\left<\begin{smallmatrix}\mathrm{d}\mathbf{B}(t)\mathrm{d}\mathbf{B}(t)^{\dag}&\mathrm{d}\mathbf{B}(t)\mathrm{d}\mathbf{B}(t)^T\\\mathrm{d}\mathbf{B}^*(t)\mathrm{d}\mathbf{B}(t)^{\dag}&\mathrm{d}\mathbf{B}^*(t)\mathrm{d}\mathbf{B}(t)^T\end{smallmatrix}\right>=\left(\begin{smallmatrix} N^T+\mathds{1}&M\\M^{\dag}&N\end{smallmatrix}\right)\mathrm{d}t:={V}(N,M)\mathrm{d}t,
\end{align}
where the brackets denote a quantum expectation. 
Note that $N=N^{\dag}$, $M=M^T$ and $V\geq0$,   which ensures that the state does not violate the uncertainty principle. 
In particular, $N=M=0$ corresponds to the  vacuum state, while pure squeezed states  satisfy $\overline{M}(N+1)^{-1}M=N$ \cite{squeezing}. 


The dynamics of a general input-output system is determined by the system's Hamiltonian and its coupling to the environment. In the Markov approximation, the joint unitary evolution of system and environment is described by the (interaction picture) unitary ${\bf U}(t)$ on the joint space 
$\mathcal{H}\otimes \mathcal{F}$, which is the solution of the quantum stochastic differential equation  \cite{LB,Dong,GZ,path,path2}
\begin{eqnarray}
\label{eq.QSDE} 
&&\mathrm{d}\mathbf{U}(t) 
         :=\mathbf{U}(t+\mathrm{d}t)-\mathbf{U}(t) \\
          &&=\left(-\left(\mathbf{K}+i \mathbf{H}\right)\mathrm{d}t             + \mathbf{L}\mathrm{d}\mathbf{B}^{\dag}
               - \mathbf{L}^{\dag}\mathrm{d}\mathbf{B}
               -\frac{1}{2}\mathbf{L}^{\dag}\mathbf{L}\mathrm{d}t\right)\mathbf{U}(t),\nonumber
\end{eqnarray}               
%
where  $\mathbf{K}=\frac{1}{2}\left(\mathbf{L}^{\dag}(1+N^T)\mathbf{L}+\mathbf{L}^TN\overline{\mathbf{L}}- \mathbf{L}^{\dag}M\overline{\mathbf{L}}-\mathbf{L}^T\overline{M}\mathbf{L}\right)$ \cite{squin} and initial condition ${\bf U}(0)= \mathbf{I}$.
Here, ${\bf H}$ and ${\bf L}$ are system operators describing the 
system's Hamiltonian and the coupling to the fields; 
$\mathrm{d}{\bf B}_i(t), \mathrm{d}{\bf B}_i^*(t)$, 
are increments of fundamental quantum stochastic processes describing the 
creation and annihilation operators in the  input channels.  

For the special case of \emph{linear} systems, 
the coupling and Hamiltonian operators are of the form  
\begin{eqnarray*}
\mathbf{L} &=& C_{-}\mathbf{a}+C_{+}\mathbf{a}^*,\\
\mathbf{H}&=&  \mathbf{a}^{\dag}\Omega_{-}\mathbf{a}+\frac{1}{2}\mathbf{a}^T\Omega_{+}^{\dag}\mathbf{a}+\frac{1}{2}\mathbf{a}^{\dag}\Omega_{+}\mathbf{a}^*,
\end{eqnarray*}
for $m\times n$ matrices $C_{-}, C_{+}$ and $n\times n$ matrices $\Omega_{-}, \Omega_{+}$ satisfying $\Omega_{-}=\Omega_{-}^{\dag}$ and $\Omega_{+}=\Omega_{+}^T$. 
As shown below, this insures that all canonical variables evolve linearly in time. Indeed, let $\mathbf{a}(t)$ and 
$\mathbf{B}^{out}(t)$ be the Heisenberg evolved system and output variables 
\begin{eqnarray}
       \mathbf{a}(t):=\mathbf{U}(t)^{\dag}\mathbf{a}\mathbf{U}(t),~~~ 
       \mathbf{B}^{out}(t):=\mathbf{U}(t)^{\dag}\mathbf{B}(t)\mathbf{U}(t).
\end{eqnarray}
By using the QSDE \eqref{eq.QSDE} and the Ito rules (\ref{Ito}) one can obtain the 
following Ito-form quantum stochastic differential equation of the QLS in the doubled-up notation 
\begin{eqnarray}\label{langevin}
       \mathrm{d}\breve{\bf a}(t) &=& 
            A \breve{\bf a}(t)\mathrm{d}t-C^{\flat}\mathrm{d} \breve{\bf B}(t),\\
       \mathrm{d} \breve{\bf B}^{out}(t) &=& 
            C \breve{\bf a}(t)\mathrm{d}t+\mathrm{d} \breve{\bf B}(t),
\end{eqnarray}
where $ \breve{\bf a} := ({\bf a}^T, {\bf a^*}^T)^T$,  $C:=\Delta\left(C_{-}, C_{+}\right)$ and $A:=\Delta\left(A_{-}, A_{+}\right)=-\frac{1}{2}C^\flat C-iJ_n\Omega$ with $\Omega= \Delta\left(\Omega_{-}, \Omega_{+}\right)$ and
\[
A_{\mp}:=-\frac{1}{2}\left(C_{-}^{\dag}C_{\mp}-C_{+}^{T}\overline{C}_{\pm}\right)-i\Omega_{\mp}.
\] 
To be explicit, the behaviour of the linear system is completely characterised by the dynamical parameters $(C, A)$ (or equivalently $(C, \Omega)$).
Note that not all choices of  $A$ and $C$ may be physically realisable as open quantum systems \cite{JNP}.

A special case of linear systems is that  of \textit{passive} quantum linear systems (PQLSs) \cite{Guta} for which $C_{+}=0$ and 
$\Omega_{+}=0$.

\subsection{Controllability and observability}
By taking the expectation with respect to the initial joint system state of Equations (\ref{langevin}) we obtain the following classical linear system
\begin{eqnarray}\label{classicallangevin}
       \mathrm{d}\left<\breve{\bf a}(t) \right>= 
            A\left< \breve{\bf a}(t) \right>\mathrm{d}t-C^{\flat}\mathrm{d}\left<\breve{\bf B}(t)\right>,\\
       \mathrm{d}\left<\breve{\bf B}^{out}(t)\right> = 
            C\left<\breve{\bf a}(t) \right>\mathrm{d}t+\mathrm{d}\left<\breve{\bf B}(t)\right>.
\end{eqnarray}

\begin{defn}
The quantum linear system (\ref{langevin}) is said to be Hurwitz stable (respectively controllable, observable) if the corresponding classical system (\ref{classicallangevin}) is Hurwitz stable (respectively controllable, observable).
\end{defn}
In general, for a quantum linear system observability and controllability are equivalent \cite{Indep}. A system possessing one (and hence both) of these properties is called \textit{minimal}.  However, although the statement   [Hurwitz $\implies$ minimal]  is true \cite{Naoki}, the converse statement ([minimal $\implies$ Hurwitz]) is not necessarily so \cite{levitt}. We therefore  assume that all systems considered here are Hurwitz (hence minimal). 

\subsection{Frequency-domain representation}\label{freqrep}

For linear systems it is often useful to switch from the time domain dynamics 
described above, to the frequency domain picture. 
Recall that the Laplace transform of a generic process ${\bf x}(t)$ is defined by
\begin{equation}\label{eq.laplace.classic}
 \mathbf{x}(s) := \mathcal{L}[\mathbf{x}](s)
      =\int_{-\infty}^\infty e^{-st}{\bf x}(t)dt, 
\end{equation}
where $s\in\mathbb{C}$. 
In the Laplace domain the input and output fields are related as follows 
\cite{Yanagisawa}:
\begin{equation}
\breve{\bf b}^{out}(s) = 
\Xi(s)
 \breve{\bf b}(s),
\label{iol}
\end{equation}
where $ \Xi(s)$ is \emph{transfer function matrix} of the system%
\begin{equation}
\label{eq.transfer.function.general}
       \Xi(s)=\Big\{\mathds{1}_m-C(s\mathds{1}_n-A)^{-1}C^{\flat}\Big\}=
       \left(\begin{smallmatrix}
       \Xi_{-}(s)&\Xi_{+}(s)
       \\
       \overline{\Xi_{+}\left(\overline{s}\right)}& \overline{\Xi_{-}\left(\overline{s}\right)}
       \end{smallmatrix}\right).
\end{equation}

In particular, the frequency domain input-output 
relation is
$\breve{\bf b}^{out}(-i\omega) = 
\Xi(-i\omega)
 \breve{\bf b}(-i\omega).$
The corresponding commutation relations are
$\left[\mathbf{b}(-i\omega),\mathbf{b}(-i\omega')^*\right]
=i\delta(\omega-\omega')\mathds{1}$, and similarly for the output modes\footnote{Note that the position of the conjugation sign is important here because in general  $\mathbf{b}(-i\omega')^*$ and $\mathbf{b}^*(-i\omega')$ are not the same, cf. equation (\ref{eq.laplace.classic}).}. 
As a consequence, the transfer matrix $\Xi(-i\omega)$ is symplectic for all frequencies $\omega$. 

We do not consider static squeezing or scattering processes on the field in this paper (see e.g. \cite{squeezing}).



\subsection{Transfer function identifiability}\label{identifiability}
The input-output relation \eqref{iol} shows that the experimenter can at most  
identify the transfer function $\Xi(s)$ of the system with any measurement of the field.
The following result from \cite{levitt} tells us which dynamical parameters of a QLS can be identified by observing the output fields for appropriately chosen input states. 
\begin{thm}\label{symplecticequivalence}
Let $\left(A, C \right)$ and $\left(A', C'\right)$ be two minimal, and stable QLSs. 
Then they have the same transfer function if and only if there exists a symplectic matrix $T$ such that 
\begin{equation}\label{eq.equivalene.classes}
A'=TAT^{\flat}, \,\,\, C'=CT^{\flat}.
\end{equation}
\end{thm}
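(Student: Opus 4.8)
The statement is an equivalence whose forward (``if'') direction is a direct computation and whose converse carries the real content. For the ``if'' direction I would assume $A'=TAT^{\flat}$ and $C'=CT^{\flat}$ with $T$ symplectic and substitute into \eqref{eq.transfer.function.general}. The two facts that make everything collapse are that symplectic matrices satisfy $T^{\flat}=T^{-1}$ (Definition \ref{def.symplectic}) and that $\flat$ is an anti-homomorphism, $(XY)^{\flat}=Y^{\flat}X^{\flat}$ with $(T^{\flat})^{\flat}=T$. Then $s\mathds{1}_n-A'=T(s\mathds{1}_n-A)T^{\flat}$, so $(s\mathds{1}_n-A')^{-1}=T(s\mathds{1}_n-A)^{-1}T^{\flat}$, while $C'^{\flat}=TC^{\flat}$; the product $C'(s\mathds{1}_n-A')^{-1}C'^{\flat}$ then telescopes through $T^{\flat}T=\mathds{1}$ back to $C(s\mathds{1}_n-A)^{-1}C^{\flat}$, giving $\Xi'(s)=\Xi(s)$.

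For the converse the plan is to first reduce to classical realization theory and then upgrade the resulting similarity to a symplectic one. Writing $\Xi(s)=\mathds{1}_m+C(s\mathds{1}_n-A)^{-1}B$ with $B:=-C^{\flat}$, both $(A,B,C)$ and $(A',B',C')$ are minimal realizations of the same transfer function, so the state-space isomorphism theorem supplies a \emph{unique} invertible $T$ with $A'=TAT^{-1}$, $C'=CT^{-1}$ and $B'=TB$; the last relation reads $C'^{\flat}=TC^{\flat}$. It then remains to prove that this $T$ is symplectic, i.e. both doubled-up and $\flat$-unitary.

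The doubled-up structure I would extract from uniqueness. Letting $\Gamma:=\Delta(0,\mathds{1})$ be the swap, a matrix $X$ is of the form $\Delta(X_-,X_+)$ exactly when $\Gamma\overline{X}\Gamma=X$; since $A,A',C,C'$ all have this form, conjugating $A'=TAT^{-1}$ and $C'=CT^{-1}$ shows that $\Gamma\overline{T}\Gamma$ is again an intertwiner of the two realizations, whence uniqueness forces $\Gamma\overline{T}\Gamma=T$, i.e. $T=\Delta(T_-,T_+)$. For $\flat$-unitarity I set $P:=T^{\flat}T$, which satisfies $P^{\flat}=P$. Taking $\flat$ of $C'=CT^{-1}$ gives $C'^{\flat}=(T^{\flat})^{-1}C^{\flat}$, and comparing with $C'^{\flat}=TC^{\flat}$ yields $PC^{\flat}=C^{\flat}$. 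The decisive extra ingredient is the physical-realizability identity $A+A^{\flat}=-C^{\flat}C$ (and its primed version), which follows from $A=-\tfrac12 C^{\flat}C-iJ_n\Omega$ with $\Omega^{\dag}=\Omega$. Substituting the intertwining relations into the primed identity and cancelling against the unprimed one leaves $A^{\flat}P=PA^{\flat}$; applying $\flat$ and using $P^{\flat}=P$ gives $AP=PA$ as well. Hence $PA^{k}C^{\flat}=A^{k}PC^{\flat}=A^{k}C^{\flat}$ for all $k$, so $P$ fixes the controllability subspace $\sum_{k}\mathrm{range}(A^{k}C^{\flat})$, which equals $\mathbb{C}^{2n}$ by minimality. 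Therefore $P=\mathds{1}_{2n}$, i.e. $T^{\flat}T=\mathds{1}_{2n}$, and $T$ is symplectic.

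The main obstacle is precisely this last step: the isomorphism theorem alone only produces a generic similarity, and the whole physical content---that the intertwiner must respect the CCR/symplectic structure---has to be extracted. I expect the technical heart to be establishing the realizability identity $A+A^{\flat}=-C^{\flat}C$ and then propagating $PC^{\flat}=C^{\flat}$ across the full controllability subspace via the commutation $PA=AP$, as all the quantum input is concentrated there.
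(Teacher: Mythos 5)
Your proof is correct. Note that this paper does not actually prove Theorem \ref{symplecticequivalence}; it imports it from \cite{levitt} without proof, so there is no in-paper argument to compare against. Your route is the standard one (and the one used in the cited source): the classical state-space isomorphism theorem yields a unique similarity $T$, uniqueness plus the doubled-up form of $A,C,C^{\flat}$ forces $\Gamma\overline{T}\Gamma=T$, and the physical realizability identity $A+A^{\flat}+C^{\flat}C=0$ combined with $PC^{\flat}=C^{\flat}$, $PA=AP$ and controllability forces $T^{\flat}T=\mathds{1}$; all the individual steps check out.
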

Therefore, without any additional information, we can at most identify the symplectic equivalence class of systems here. 





\section{Power spectrum identification; problem formulation}\label{section3}


We consider a setting where the input fields are stationary 
 `squeezed quantum noise', i.e. a zero-mean, pure Gaussian state with time-independent increments, which is completely characterised by its covariance matrix $V$, cf. equation \eqref{Ito}. In the frequency domain the state can be seen as a continuous tensor product over frequency modes of squeezed states  with covariance ${ V}$. Since we deal with a linear system, the input-output map consists of applying a (frequency dependent) unitary Bogolubov transformation whose linear symplectic action on the frequency modes is given by the transfer function 
$$
\breve{\bf b}^{out} (-i\omega)  =  
\Xi (-i\omega) \breve{\bf b} (-i\omega).
$$
Consequently, the output state is a Gaussian state consisting of independent frequency modes with covariance matrix
\begin{eqnarray*} 
\left< \breve{\bf b}^{out} (-i\omega) \breve{\bf b}^{out} (-i\omega^\prime)^\dagger \right>=   
 \Psi_{ V}(-i\omega) \delta(\omega-\omega^\prime)  
 \end{eqnarray*}
 where $ \Psi_{V}(-i\omega)$ is the restriction to the imaginary axis of the \emph{power spectral density} (or power spectrum) defined in the Laplace domain by
\begin{equation}\label{powers}
\Psi_{V}(s)= \Xi(s){ V}\Xi(-\overline{s})^{\dag}.
\end{equation}

Our goal is to find which system parameters are identifiable from the field,   where the quantum input has a given covariance matrix ${ V}$. Since in this case the output is uniquely defined by its power spectrum 
$\Psi_{ V}(s)$ this reduces to identifying the equivalence class of systems with a given power spectrum. 
Moreover, since the latter depends on the system parameters via the transfer function, it is clear that one can identify `at most as much as' the transfer function discussed in Section \ref{identifiability}. In other words  
the corresponding equivalence classes are at least as large as those described by symplectic transformations \eqref{eq.equivalene.classes}. 
%

In the analogous classical problem, the power spectrum can also be computed from the output correlations. 
The spectral factorisation problem \cite{Youla} is tasked with finding a transfer function  from the power spectrum. There are known algorithms \cite{Youla, Davies} to do this. One then finds a system realisation (i.e. matrices governing the system dynamics) for the given transfer function \cite{Ljung}. The problem is that the map from power spectrum to transfer functions is  non-unique, and each factorisation could lead to system realisations of differing dimension. For this reason, the concept of \textit{global minimality} was introduced in \cite{kalman} to select the transfer function with smallest system dimension. This raises the following question: is global minimality sufficient to uniquely identify the transfer function from the power spectrum? 
The answer is in general negative \footnote{However, under the assumption of \textit{outer} transfer functions this identification  is unique (see \cite{nerve}).} (see \cite{anders, glover, nerve}). Our aim is to address these questions in the quantum case. Note that these questions  have been answered for a generic class of SISO systems in \cite{levitt} by using a brute force argument to identify the poles and zeros of the transfer function from those of the power spectrum. 

We conclude this section by formally introducing  global minimality and describing two results  that will be useful later.

\begin{defn}
A system $(A,C)$ is said to be globally minimal for (pure) input covariance $V$ if there exists no lower dimensional system with the same power spectrum, $\Psi_{V}$.
\end{defn}

For example, if the input is the vacuum and the system is passive, then the  power spectrum will be vacuum, which is  the same as that of a zero-dimensional system. 

Observe that  as the input is pure, we may write it as 
$V=SV_{\mathrm{vac}}S^{\dag}$ for some symplectic matrix $S$, where $V_{\mathrm{vac}}=\left(\begin{smallmatrix}1&0\\0&0\end{smallmatrix}\right)$.
Specifically, 
\begin{equation}\label{vtrick}
S=\Delta\left((N^T+1)^{1/2},M\left(N^{\dag}+1\right)^{-1/2}\right)
. 
\end{equation}
Now,  since input is  known (i.e the choice of the experimenter) we  instead consider the modified system with coupling and hamiltonian operators $\tilde{C}:=CS^{\flat}$ and $\tilde{\Omega}=\Omega$, which has  power spectrum $\tilde{\Psi}(s)=S^{\flat}\Psi(s)\left(S^{\dag}\right)^{\flat}$. In this basis the field is in vacuum.
In light of this we  will  assume  that the input is vacuum. 




The following  theorem from 
 \cite{levitt} links global minimality with the purity of the stationary state. 
 \begin{thm}\label{equivalence} 
 Let $\mathcal{G}:= \left(S, C,\Omega\right)$ be a QLS with input 
 $V_{\mathrm{vac}}$.

1. The system  is globally minimal  if and only if the (Gaussian) stationary state of the system with covariance $P$ satisfying the Lyapunov equation 

\begin{equation}\label{leap}
AP+PA^{\dag}+C^{\flat}{ V_{\mathrm{vac}}}\left(C^{\flat}\right)^{\dag}=0
\end{equation}
is fully mixed.

2. A non-globally minimal system is the series product of its restriction to the pure component and the mixed component. 

3. The reduction to the mixed component is globally minimal and has the same power spectrum as the original system.   

\end{thm}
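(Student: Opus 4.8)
The plan is to translate the three claims, which are assertions about the stationary state, into statements about the spectral factor of the power spectrum. The starting observation is that, writing $V_{\mathrm{vac}}=EE^{\dagger}$ with $E=\left(\begin{smallmatrix}\mathds{1}_m\\0\end{smallmatrix}\right)$, the power spectrum \eqref{powers} factorises as $\Psi(s)=W(s)W(-\overline{s})^{\dagger}$ with spectral factor $W(s)=\Xi(s)E$, which by \eqref{eq.transfer.function.general} admits the state-space realisation $(A,\,-C^{\flat}E,\,C,\,E)$. Crucially, the source term in the Lyapunov equation \eqref{leap} is $C^{\flat}V_{\mathrm{vac}}(C^{\flat})^{\dagger}=(C^{\flat}E)(C^{\flat}E)^{\dagger}$, so the stationary covariance $P$ is precisely the controllability Gramian of the pair $(A,C^{\flat}E)$ governing $W$. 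Global minimality is therefore the problem of finding the least McMillan degree among admissible (symplectic) spectral factors of $\Psi$, and I would phrase everything through $P$ and its symplectic invariants.

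First I would set up the reduction. Since $A$ is Hurwitz, \eqref{leap} has a unique solution $P$, the covariance of the Gaussian stationary state. Applying a Williamson-type symplectic decomposition, there is a symplectic $T$ with $TPT^{\dagger}=P_{\mathrm{pure}}\oplus P_{\mathrm{mix}}$, where $P_{\mathrm{pure}}$ collects the modes with unit (vacuum) symplectic eigenvalue and $P_{\mathrm{mix}}$ has all symplectic eigenvalues strictly above the vacuum value. By Theorem \ref{symplecticequivalence}, passing from $(A,C)$ to $(TAT^{\flat},CT^{\flat})$ leaves the transfer function, and hence the power spectrum, invariant (one checks directly that $TPT^{\dagger}$ solves the transformed Lyapunov equation), so I may assume $P$ is already in this block form.

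With $P$ block-diagonalised, for Part 2 I would show that the block structure of $P$ forces a triangular (cascade) structure on $(A,C)$. Feeding the block-diagonal $P$ into \eqref{leap} together with the physical-realisability relation $A=-\tfrac12 C^{\flat}C-iJ_n\Omega$, the saturation of the uncertainty bound on the $P_{\mathrm{pure}}$ block forces the coupling from the mixed modes into the pure modes to vanish, leaving exactly the block-triangular generator of a Gough--James series product of the pure-mode subsystem followed by the mixed-mode subsystem. For Part 3, the series product factorises the transfer function as $\Xi=\Xi_{\mathrm{mix}}\Xi_{\mathrm{pure}}$, so $\Psi=\Xi_{\mathrm{mix}}\big(\Xi_{\mathrm{pure}}V_{\mathrm{vac}}\Xi_{\mathrm{pure}}(-\overline{s})^{\dagger}\big)\Xi_{\mathrm{mix}}^{\dagger}$; the key lemma is that the pure subsystem, sitting in a pure stationary state under vacuum drive, has a lossless (inner) transfer function obeying $\Xi_{\mathrm{pure}}(s)V_{\mathrm{vac}}\Xi_{\mathrm{pure}}(-\overline{s})^{\dagger}=V_{\mathrm{vac}}$. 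Substituting gives $\Psi=\Psi_{\mathrm{mix}}$, so the strictly lower-dimensional mixed subsystem reproduces the power spectrum, proving Part 3 and, by contraposition, the implication globally minimal $\Rightarrow$ fully mixed in Part 1.

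The remaining and hardest direction is fully mixed $\Rightarrow$ globally minimal: I must exclude \emph{every} lower-dimensional realisation, not only those obtained by peeling off a pure component. Working with the factorisation $\Psi(s)=W(s)W(-\overline{s})^{\dagger}$, this amounts to showing that when $P$ has all symplectic eigenvalues strictly above vacuum, no admissible spectral factor of $\Psi$ has McMillan degree below $n$. I expect this lower bound to be the main obstacle: it requires relating the minimal factor degree to the solution set of the associated $\flat$-algebraic Riccati equation (whose extreme solutions index the minimal factors in the classical theory) and showing that a degree drop would force a symplectic eigenvalue equal to the vacuum value, contradicting full mixedness. The point at which the quantum problem departs from its classical counterpart is exactly here: the symplectic (Bogolubov/unitarity) constraint on $\Xi$ together with the purity of the input restricts the admissible factorisations far more than in the classical case, and carefully exploiting these two constraints to pin the minimal degree at $n$ is the delicate step.
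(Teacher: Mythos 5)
First, a caveat: the paper does not actually prove Theorem~\ref{equivalence} --- it imports it from \cite{levitt} --- so your proposal can only be judged on its own terms. Your architecture (Williamson decomposition of the stationary covariance $P$, a symplectic change of coordinates that leaves $\Psi$ invariant, block-triangularisation into a series product with the pure subsystem acting first, and losslessness of the pure factor) is the right route for Parts~2 and~3 and for the direction [globally minimal $\Rightarrow$ fully mixed]. But two steps that carry essentially all of the content are asserted rather than proved. The ``key lemma'' $\Xi_{\mathrm{pure}}(s)V_{\mathrm{vac}}\Xi_{\mathrm{pure}}(-\overline{s})^{\dag}=V_{\mathrm{vac}}$ is exactly the non-trivial part of Part~3: by the standard spectral-factorisation identity one has
\begin{equation*}
\Psi(s)=V_{\mathrm{vac}}+C(s-A)^{-1}\bigl(PC^{\dag}-C^{\flat}V_{\mathrm{vac}}\bigr)+\bigl(CP-V_{\mathrm{vac}}(C^{\flat})^{\dag}\bigr)(-s-A^{\dag})^{-1}C^{\dag},
\end{equation*}
so the lemma reduces to showing that purity of the stationary covariance of the pure block, together with the physical-realisability relation $A+A^{\flat}+C^{\flat}C=0$ and the Lyapunov equation \eqref{leap}, forces $P_{\mathrm{pure}}C_{\mathrm{pure}}^{\dag}=C_{\mathrm{pure}}^{\flat}V_{\mathrm{vac}}$ (note this fails for a generic active system --- e.g.\ for $P=V_{\mathrm{vac}}$ it forces $C_{+}=0$ --- so purity must genuinely enter). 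You give no argument for this, and it is the same identity that underlies your Part~2 claim that saturation of the uncertainty bound kills the coupling from the mixed into the pure modes; both need to be written out.

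The second, and larger, gap is the converse direction of Part~1: you explicitly leave [fully mixed $\Rightarrow$ globally minimal] as an ``obstacle'' and sketch a route through the $\flat$-algebraic Riccati equation and extreme solutions of the classical spectral-factorisation theory. That machinery is not needed and would be painful to adapt to the symplectic setting. A short argument is available entirely within the paper's toolkit: fully mixed means $P>0$, which by classical Lyapunov theory is equivalent to controllability of $\left(A,C^{\flat}V_{\mathrm{vac}}\right)$ (the non-circular part of Lemma~\ref{LEM1}); the proof of Theorem~\ref{games} then shows that the $4n$-dimensional cascade realisation \eqref{cask} of $\Psi(s)J$ is minimal, so the McMillan degree of $\Psi(s)J$ equals $4n$. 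Any $n'$-mode quantum system with the same power spectrum furnishes a $4n'$-dimensional realisation of the same rational matrix, whence $4n'\geq 4n$ and $n'\geq n$. This closes Part~1 without any Riccati analysis. As it stands, your proposal is a correct plan with the two hardest steps --- the lossless-factor identity and the degree lower bound --- left unproved.
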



\begin{Lemma}\label{LEM1}
Suppose that we have a QLS $\left(C, \Omega\right)$ with input $V_{\mathrm{vac}}$, then the following are equivalent:
\begin{enumerate}
\item The system is  globally minimal
 \item $\left(A, C^{\flat}{ V}_{\mathrm{vac}}\right)$ is controllable.
 \item  $\left({ V}_{\mathrm{vac}}C, A^{\flat}\right)$ is observable.
 \end{enumerate}
 \end{Lemma}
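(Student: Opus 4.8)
The plan is to prove the two equivalences $(1)\Leftrightarrow(2)$ and $(2)\Leftrightarrow(3)$ separately, using Theorem \ref{equivalence} to dispose of the global-minimality condition and a $\flat$-duality to interchange controllability and observability.

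For $(1)\Leftrightarrow(2)$ I would start from Theorem \ref{equivalence}, which states that the system is globally minimal if and only if the stationary Gaussian state, with covariance $P$ solving the Lyapunov equation \eqref{leap}, is fully mixed. The key algebraic observation is that $V_{\mathrm{vac}}=\left(\begin{smallmatrix}\mathds{1}&0\\0&0\end{smallmatrix}\right)$ is a Hermitian projection, so $V_{\mathrm{vac}}=V_{\mathrm{vac}}V_{\mathrm{vac}}^{\dag}$ and the inhomogeneous term in \eqref{leap} factorises as $C^{\flat}V_{\mathrm{vac}}(C^{\flat})^{\dag}=(C^{\flat}V_{\mathrm{vac}})(C^{\flat}V_{\mathrm{vac}})^{\dag}$. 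Thus \eqref{leap} is exactly the controllability Lyapunov equation for the pair $(A,C^{\flat}V_{\mathrm{vac}})$. Since all systems are assumed Hurwitz, $P$ is the associated controllability Gramian, and the standard result gives that $(A,C^{\flat}V_{\mathrm{vac}})$ is controllable if and only if $P>0$. It then remains only to identify ``$P>0$'' with ``fully mixed''.

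The quantum content of the argument is precisely this last identification. I would argue that a stationary Gaussian state decomposes, via the symplectic (Williamson) normal form, into a tensor product of a pure and a mixed component, and that in the doubled-up convention each pure mode renders the covariance rank-deficient (for instance a pure single-mode state has vanishing determinant of its $2\times2$ doubled-up covariance). Hence the state has no pure component, i.e.\ is fully mixed, exactly when $P$ is nonsingular, equivalently $P>0$ since $P\geq0$ always. Chaining these equivalences yields $(1)\Leftrightarrow(2)$. I expect this step --- making rigorous that full mixedness corresponds to strict positivity of the doubled-up covariance --- to be the main obstacle, since it is the one place where the quantum-optical structure (purity, symplectic spectrum) rather than pure linear algebra enters.

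For $(2)\Leftrightarrow(3)$ I would establish a $\flat$-duality between controllability and observability. Since $Z^{\flat}=J Z^{\dag}J$ differs from the adjoint only by conjugation with the invertible matrices $J$, the map $\flat$ preserves rank and reverses products, $(XY)^{\flat}=Y^{\flat}X^{\flat}$. Combining the classical duality ``$(A,B)$ controllable $\iff$ $(B^{\dag},A^{\dag})$ observable'' with a similarity transform by $J_n$ and a left-multiplication of the output matrix by the invertible $J_m$, one obtains that $(A,B)$ is controllable if and only if $(B^{\flat},A^{\flat})$ is observable. Applying this with $B=C^{\flat}V_{\mathrm{vac}}$ and using the involution $(C^{\flat})^{\flat}=C$ together with the identity $V_{\mathrm{vac}}^{\flat}=J_m V_{\mathrm{vac}}J_m=V_{\mathrm{vac}}$, I compute $B^{\flat}=(C^{\flat}V_{\mathrm{vac}})^{\flat}=V_{\mathrm{vac}}^{\flat}(C^{\flat})^{\flat}=V_{\mathrm{vac}}C$, so that $(A,C^{\flat}V_{\mathrm{vac}})$ is controllable if and only if $(V_{\mathrm{vac}}C,A^{\flat})$ is observable. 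This is exactly $(2)\Leftrightarrow(3)$, completing the proof.
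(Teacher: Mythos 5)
Your proposal is correct and follows essentially the same route as the paper: $(1)\Leftrightarrow(2)$ via Theorem \ref{equivalence} together with the standard Lyapunov/Gramian characterisation of controllability (the paper cites Theorem 3.1 of \cite{zhou} for the step you spell out using $V_{\mathrm{vac}}=V_{\mathrm{vac}}V_{\mathrm{vac}}^{\dag}$), and $(2)\Leftrightarrow(3)$ via classical duality combined with conjugation by $J$. The only cosmetic difference is that you package the second step as an abstract $\flat$-duality (using that $\flat$ reverses products and preserves rank), whereas the paper runs the same $J$-conjugation through an explicit PBH eigenvector test on $\left(V_{\mathrm{vac}}\left(C^{\flat}\right)^{\dag}, A^{\dag}\right)$.
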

 
\begin{proof}
For the equivalence between (1) and (2): Using Theorem \ref{equivalence}, global minimality is equivalent to a fully mixed stationary state, which is in turn equivalent to  $P>0$ in \eqref{leap}.  Furthermore, by Theorem 3.1 in \cite{zhou}  $P>0$ in \eqref{leap} is equivalent to  $\left(A, C^{\flat}V_{\mathrm{vac}}\right)$ being controllable. 

It remains to show equivalence between (2) and (3). Firstly, by the duality condition  in \cite[Theorem 3.3]{zhou}  $\left(A, C^{\flat}{ V}_{\mathrm{vac}}\right)$  controllable is equivalent to   $\left({V}_{\mathrm{vac}}\left(C^{\flat}\right)^{\dag}, A^{\dag}\right)$ observable. It therefore remains to show equivalence between the observability of $\left({V}_{\mathrm{vac}}\left(C^{\flat}\right)^{\dag}, A^{\dag}\right)$ and $\left({V}_{\mathrm{vac}}C, A^{\flat}\right)$. 

Suppose that $\left({ V}_{\mathrm{vac}}\left(C^{\flat}\right)^{\dag}, A^{\dag}\right)$ is observable. To show observability of $\left({V}_{\mathrm{vac}}C, A^{\flat}\right)$ we need to show that for all eigenvectors and eigenvalues of $A^{\flat}$, i.e. $A^{\flat}y=\lambda y$, then ${ V}_{\mathrm{vac}}Cy\neq0$ \cite{zhou}. 
To this end suppose that $A^{\flat}y=\lambda y$, then $A^{\dag}\left(Jy\right)=\lambda\left(Jy\right)$, which by the observability of $\left({ V}_{\mathrm{vac}}\left(C^{\flat}\right)^{\dag}, A^{\dag}\right)$ implies that ${ V}_{\mathrm{vac}}\left(C^{\flat}\right)^{\dag}\left(Jy\right)\neq0$. Therefore, ${ V}_{\mathrm{vac}}Cy\neq0$ and we are done.
The reverse implication follows similarly.
\end{proof}




\section{Power spectrum identifiability}\label{section4}

 In this section we show that  two globally minimal systems have the same power spectrum iff they have the same transfer function. We show this by treating the power spectrum of the quantum system as a transfer function of a cascade of two classical systems (with the combined system having twice as many modes). 
 We then solve the equivalent minimal transfer function problem, which  is much simpler than the original problem.

\subsection{Description of power spectrum as a cascade of systems}\label{god}

Using \eqref{powers},  write the power spectrum as a transfer function of the following two cascaded \cite{zhou} systems:
\begin{itemize}
\item The first system  is $\left(-A^{\flat}, -C^{\flat}, -V_{\mathrm{vac}}C, V_{\mathrm{vac}}\right)$
\item The second system is  $\left(A, -C^{\flat}V_{\mathrm{vac}}, C, V_{\mathrm{vac}}\right)$. 
\end{itemize} 
It should be understood that the first system is fed into the second (see Fig \ref{casc}). Note that the first system is  unstable, whereas the second is stable. A representation for the resultant cascaded system with transfer function $\Psi(s)J$ is \cite{zhou}
\begin{equation}\label{cask}
\left(\tilde{A}, \tilde{B}, \tilde{C}, \tilde{D}\right):=
\left(\left(\begin{smallmatrix} -A^{\flat}&0\\C^{\flat}V_{\mathrm{vac}}C&A\end{smallmatrix}\right), \left(\begin{smallmatrix} -C^{\flat}\\-C^{\flat} V_{\mathrm{vac}}\end{smallmatrix}\right), \left(\begin{smallmatrix} -V_{\mathrm{vac}}C& C\end{smallmatrix}\right), V_{\mathrm{vac}}       \right).
\end{equation}

Now, in in the form \eqref{cask} notice that $\tilde{A}$ has $4n$ eigenvalues. It is also 
 lower block triangular (LBT) with the following properties:
\begin{itemize}
\item[1)] \label{pil1} It has $2n$ right-(generalised\footnote{A matrix is diagonalisable iff it has a full basis of eigenvectors. Generalised eigenvectors are a next best thing to eigenvectors enabling one to `almost diagonalise' a matrix. More specifically, a vector  $x$ is a generalised eigenvector of rank $m$ with corresponding eigenvalue $\lambda$ if 
$$\left(A-\lambda \mathds{1}\right)^mx=0$$
(but
$\left(A-\lambda \mathds{1}\right)^{m-1}x\neq0$).
For every matrix $A$ there exists an invertible matrix $M$, whose columns consist of the generalised eigenvectors, such that $J=M^{-1}AM$ where $J$ is a matrix called the \textit{Jordan normal matrix} and is given by 
$$J=\mathrm{diag}(J_1, J_2,..., J_r) \quad \mathrm{where} \quad
J_i=\left(\begin{smallmatrix} \lambda_i &1&&\\
&\lambda_i &1&\\
&&\ddots&1\\
&&&\lambda_i\end{smallmatrix}\right).$$
 })-eigenvectors of the form $\left(\begin{smallmatrix}0\\y_2^{(i)}\end{smallmatrix}\right)$ with (possibly non-distinct) eigenvalues $\lambda^{(i)}$, which satisfy  $\mathrm{Re}(\lambda^{(i)})<0$. Note that $y_2^{(i)} $ and $\lambda^{(i)}$ are right-(generalised) eigenvectors and eigenvalues of $A$.
\item[2)] \label{pil2} 
It has $2n$ left-(generalised) eigenvectors of the form  $\left(\begin{smallmatrix}x_1^{(i)},&0\end{smallmatrix}\right)$ with  (possibly non-distinct) eigenvalues $\mu^{(i)}$, which satisfy $\mathrm{Re}(\mu^{(i)})>0$. Note that $x_1^{(i)}$ and $\mu^{(i)}$ are left-eigenvectors and eigenvalues of $-A^{\flat}$.
\end{itemize}

\begin{defn}
A matrix $A$ is called \textit{proper ordered lower block triangular (proper LBT)} if it is it LBT and satisfies 1) and 2).
\end{defn}

 \begin{figure}[h]
\centering
\includegraphics[scale=0.30]{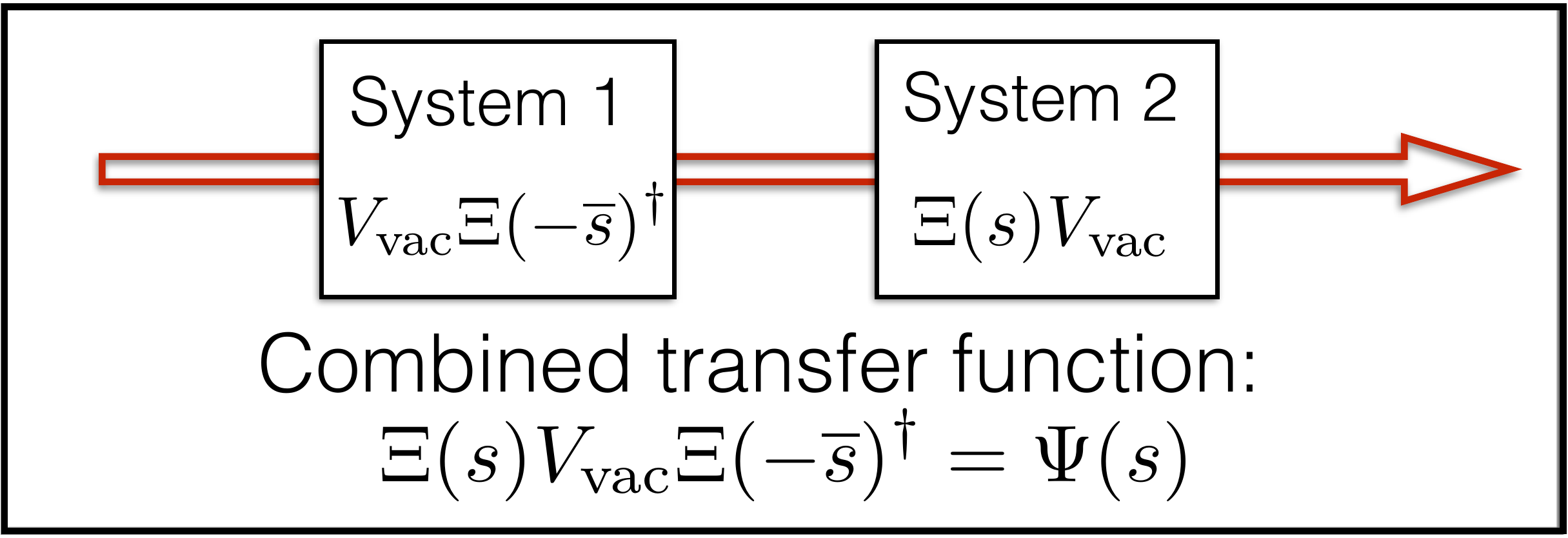}
\caption{The setup in Section \ref{god} where the power spectrum is treated as two systems connected in series. \label{casc}}
\end{figure}



\begin{Lemma}\label{hud}
If two proper LBT matrices, $\tilde{A}$ and $\tilde{A}'$,  are related via $\tilde{A}'=T\tilde{A}T^{-1}$, where $T$ is invertible, then $T$ is LBT. 
\end{Lemma}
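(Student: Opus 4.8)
The plan is to exploit the spectral separation built into the proper-LBT structure, namely that the stable part of the spectrum lives entirely in the bottom block while the unstable part only appears in the top block. The starting observation is that conjugation preserves the Jordan structure: if $(\tilde{A}-\lambda\mathds{1})^m v = 0$ then $(\tilde{A}'-\lambda\mathds{1})^m (Tv) = T(\tilde{A}-\lambda\mathds{1})^m v = 0$, so $T$ carries each generalised eigenspace of $\tilde{A}$ onto the generalised eigenspace of $\tilde{A}'$ for the \emph{same} eigenvalue. In particular $T$ maps the stable generalised eigenspace of $\tilde{A}$ (the span of all generalised eigenvectors whose eigenvalues have negative real part) bijectively onto the stable generalised eigenspace of $\tilde{A}'$.

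The crux is then to identify this stable eigenspace explicitly. Let $\mathcal{S} := \left\{\left(\begin{smallmatrix}0\\y\end{smallmatrix}\right) : y\in\mathbb{C}^{2n}\right\}$ denote the bottom block, so that $\dim\mathcal{S}=2n$. Since $\tilde{A}$ is LBT, its eigenvalues are those of $-A^{\flat}$ together with those of $A$. By the proper-LBT conditions the former have positive real part (property 2) and the latter negative real part (consistent with property 1 and the Hurwitz assumption on $A$), so the stable spectrum is exactly $\mathrm{spec}(A)$, of total multiplicity $2n$, and is disjoint from the unstable spectrum. For any stable eigenvalue $\lambda$ the top-left block $-A^{\flat}-\lambda\mathds{1}$ is then invertible; reading off the top component of $(\tilde{A}-\lambda\mathds{1})^m v = 0$ (whose top-left entry is $(-A^{\flat}-\lambda\mathds{1})^m$) forces that component to vanish, so every stable generalised eigenvector lies in $\mathcal{S}$. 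By the dimension count the stable generalised eigenspace of $\tilde{A}$ is \emph{exactly} $\mathcal{S}$, and since $\tilde{A}'$ is proper LBT with identical block sizes and sign conditions, its stable generalised eigenspace is the same subspace $\mathcal{S}$.

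Combining the two steps gives $T\mathcal{S} = \mathcal{S}$. Writing $T$ in $2\times 2$ block form with blocks $T_{ij}\in\mathbb{C}^{2n\times 2n}$, the requirement that $T\left(\begin{smallmatrix}0\\y\end{smallmatrix}\right) = \left(\begin{smallmatrix}T_{12}y\\T_{22}y\end{smallmatrix}\right)$ lie in $\mathcal{S}$ for every $y$ forces $T_{12}=0$, which is precisely the statement that $T$ is lower block triangular. (Equivalently, one can run the dual argument on the left-eigenvectors of property 2 to conclude that $T^{-1}$ is LBT and then invert, since the inverse of an invertible LBT matrix is again LBT.)

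The main obstacle is the middle step: showing the stable generalised eigenspace equals $\mathcal{S}$ \emph{exactly}, rather than merely containing the $2n$ eigenvectors listed in property 1. This is where both defining conditions of a proper LBT matrix earn their keep — the strict sign conditions $\mathrm{Re}<0$ versus $\mathrm{Re}>0$ guarantee that the stable and unstable spectra do not collide, and this disjointness, together with the dimension count, is what pins the stable eigenspace down to the bottom block. Handling genuine Jordan blocks (generalised rather than ordinary eigenvectors) instead of assuming diagonalisability is the one technical point requiring care, but the triangular block structure makes the top-component argument go through verbatim.
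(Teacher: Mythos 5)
Your proof is correct and follows essentially the same route as the paper's: push the span of the stable generalised eigenvectors (the bottom block) through $T$, observe it must land in the corresponding invariant subspace of $\tilde{A}'$, and read off lower block triangularity. You are in fact slightly more careful than the paper, which asserts without comment that the stable generalised eigenspace of $\tilde{A}'$ is exactly the bottom block — the invertibility of $(-A'^{\flat}-\lambda\mathds{1})^{m}$ plus the dimension count that you supply is precisely the justification needed there.
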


The proof is in \ref{goh}.
The final result of this subsection
 will be key to our identifiability result later.

\begin{thm}\label{games}
The quantum system $(C, \Omega)$ is globally minimal if and only if the system \eqref{cask} is minimal. 
\end{thm}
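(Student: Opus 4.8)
The plan is to prove that minimality of the $4n$-dimensional realisation \eqref{cask} is equivalent to global minimality of $(C,\Omega)$ by testing controllability and observability of \eqref{cask} with the Popov--Belevitch--Hautus (PBH) rank test and matching the resulting conditions to the two criteria of Lemma \ref{LEM1}. First I would record the physical-realisability identity $A+A^{\flat}+C^{\flat}C=0$, which follows from $A=-\tfrac12 C^{\flat}C-iJ_n\Omega$ together with the computation $A^{\flat}=J_nA^{\dag}J_n=-\tfrac12 C^{\flat}C+iJ_n\Omega$. Hence $-A^{\flat}=A+C^{\flat}C$, and the inverse transfer function is $\Xi(s)^{-1}=\mathds{1}_m+C(s\mathds{1}_n+A^{\flat})^{-1}C^{\flat}$. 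Consequently the poles of $\Xi$ are the eigenvalues of $A$ (open left half-plane, by Hurwitz stability) and its zeros are the eigenvalues of $-A^{\flat}$ (open right half-plane). In particular the two diagonal blocks of the proper LBT matrix $\tilde A$ have disjoint spectra, which is precisely what underlies properties 1) and 2).

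Next I would exploit this block structure to split the PBH tests. For the right eigenvectors of $\tilde A$ of the form $\left(\begin{smallmatrix}0\\ v_2\end{smallmatrix}\right)$ (property 1), observability reduces to $Cv_2\neq0$, i.e.\ to observability of $(C,A)$, which holds since the QLS is assumed minimal; dually, for the left eigenvectors $\left(\begin{smallmatrix}x_1 & 0\end{smallmatrix}\right)$ (property 2) controllability reduces to $x_1 C^{\flat}\neq0$, i.e.\ to controllability of $(A^{\flat},C^{\flat})$, which a short $J_n$-similarity argument identifies with observability of $(C,A)$ --- again automatic. The content therefore sits in the two \emph{cross} families. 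A direct elimination in $[\tilde A-\lambda\mathds{1}\mid \tilde B]$ at $\lambda\in\sigma(A)$ shows that a left null vector forces $w_2^{T}C^{\flat}V_{\mathrm{vac}}\,\Xi(\lambda)^{-1}=0$ for a left eigenvector $w_2$ of $A$; dually, unobservability at $\mu\in\sigma(-A^{\flat})$ forces $\Xi(\mu)\,V_{\mathrm{vac}}Cv_1=0$ for a right eigenvector $v_1$ of $-A^{\flat}$. The leading factors $w_2^{T}C^{\flat}V_{\mathrm{vac}}$ and $V_{\mathrm{vac}}Cv_1$ are exactly the PBH quantities for controllability of the stable subsystem $G_2=(A,-C^{\flat}V_{\mathrm{vac}})$ and observability of $G_1=(V_{\mathrm{vac}}C,A^{\flat})$; moreover the controllability Gramian of $G_2$ solves precisely the Lyapunov equation \eqref{leap}, so by Lemma \ref{LEM1} these factors are non-vanishing if and only if the system is globally minimal.

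The easy direction is then immediate: if the system is not globally minimal, Lemma \ref{LEM1} supplies an eigenvector with $w_2^{T}C^{\flat}V_{\mathrm{vac}}=0$ (or $V_{\mathrm{vac}}Cv_1=0$), and since $\lambda\in\sigma(A)$ is a zero of $\Xi^{-1}$ the matrix $\Xi(\lambda)^{-1}$ is finite, so the whole PBH vector vanishes and \eqref{cask} is not minimal. Alternatively one can argue this direction structurally via Theorem \ref{equivalence}: a non-globally-minimal system shares its power spectrum with its strictly lower-dimensional mixed restriction, whose own cascade \eqref{cask} realises the same $\Psi(s)J$ in dimension $<4n$, so $\Psi(s)J$ has McMillan degree $<4n$ and \eqref{cask} cannot be minimal.

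\textbf{The hard part} is the converse: assuming global minimality, I must show that the \emph{nonzero} vectors $w_2^{T}C^{\flat}V_{\mathrm{vac}}$ and $V_{\mathrm{vac}}Cv_1$ do not lie in the cokernel of $\Xi(\lambda)^{-1}$, respectively the kernel of $\Xi(\mu)$ --- and these matrices are genuinely singular, since $\lambda$ is a pole and $\mu$ a zero of $\Xi$. Eliminating as above, an identity $w_2^{T}C^{\flat}V_{\mathrm{vac}}\,\Xi(\lambda)^{-1}=0$ would produce a \emph{second} left eigenvector $r$ of $A$ at the same $\lambda$ with $w_2^{T}C^{\flat}V_{\mathrm{vac}}=-r^{T}C^{\flat}$, an alignment that I expect to be incompatible with global minimality once the \emph{purity} of the input is used. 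This is the crux and the step I expect to be most delicate: it is here that unitarity and purity must enter, through the $\flat$-unitarity (para-unitarity) of $\Xi$ on the imaginary axis, which rigidly links the zero directions of $\Xi$ to its pole directions and, together with $P>0$ from \eqref{leap}, forbids the alignment. I would carry out this kernel/cokernel analysis explicitly and then conclude that \eqref{cask} is both controllable and observable, hence minimal, completing the equivalence.
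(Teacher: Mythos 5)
Your reduction is sound as far as it goes: the block-triangular splitting of the PBH tests, the identification of the ``diagonal'' tests with ordinary minimality of $(A,C)$, the identification of the ``cross'' tests with the quantities $V_{\mathrm{vac}}Cv_1$ and $w_2^{T}C^{\flat}V_{\mathrm{vac}}$ controlled by Lemma \ref{LEM1}, and the easy direction (non-global-minimality $\Rightarrow$ non-minimal cascade) are all correct and consistent with the paper's setup. But the argument is not complete: at the decisive step --- showing that under global minimality the nonzero vector $V_{\mathrm{vac}}Cv_1$ cannot be annihilated by the genuinely singular matrix $\Xi(\mu)$ (and dually for $\Xi(\lambda)^{-1}$) --- you only state what you \emph{expect} to hold and defer the kernel/cokernel analysis. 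That step is the entire content of the theorem, so as written there is a genuine gap.

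The paper closes this gap not through para-unitarity of $\Xi$ but through the doubled-up structure of $A$ and $C$, and the argument is short. Suppose $\left(\begin{smallmatrix}v_1\\v_2\end{smallmatrix}\right)$ is an unobservable eigenvector of $\tilde A$ at $\mu$ with $v_1\neq0$, so that $-A^{\flat}v_1=\mu v_1$ and $V_{\mathrm{vac}}Cv_1=Cv_2$. Substituting the latter into the second block row of the eigenvalue equation and using $A+A^{\flat}+C^{\flat}C=0$ gives $-A^{\flat}v_2=\mu v_2$. If $v_2=0$ this already contradicts $V_{\mathrm{vac}}Cv_1\neq0$ from Lemma \ref{LEM1}; if $v_2=\left(\begin{smallmatrix}u_1\\u_2\end{smallmatrix}\right)\neq0$ then, because $A^{\flat}$ is doubled-up, $\left(\begin{smallmatrix}\overline{u}_2\\\overline{u}_1\end{smallmatrix}\right)$ is an eigenvector of $-A^{\flat}$ at $\overline{\mu}$, and applying Lemma \ref{LEM1} to \emph{that} eigenvector yields $C_-\overline{u}_2+C_+\overline{u}_1\neq0$, i.e.\ after conjugation the \emph{lower} block $\overline{C_+}u_1+\overline{C_-}u_2$ of $Cv_2$ is nonzero. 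But $V_{\mathrm{vac}}$ has zero lower block, so the relation $V_{\mathrm{vac}}Cv_1=Cv_2$ forces exactly that lower block to vanish --- a contradiction. So purity does enter, but via the rank deficiency of $V_{\mathrm{vac}}$ combined with the $\mu\leftrightarrow\overline{\mu}$ pairing of eigenvectors of $-A^{\flat}$ coming from the $\Delta(\cdot,\cdot)$ symmetry, rather than via the symplectic link between pole and zero directions of $\Xi$ that you propose. Your route may well be completable, but this pairing (or an equivalent structural input) is the missing ingredient you would have to supply; controllability then follows by the $J$-duality you already sketch.
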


\begin{proof}
The reverse implication here is trivial. For the `if' statement we need to prove controllability and observability. 

Firstly, the observability of $\left(\tilde{C}, \tilde{A}\right)$. Suppose that  
\begin{equation}\label{contro} 
\left(\begin{smallmatrix} -A^{\flat}&0\\C^{\flat}V_{\mathrm{vac}}C&A\end{smallmatrix}\right)\left(\begin{smallmatrix}y_1\\y_2\end{smallmatrix}\right)=\left(\begin{smallmatrix}\lambda y_1\\\lambda y_2\end{smallmatrix}\right),
\end{equation}
then in order to show observability we require that $\left(\begin{smallmatrix} -V_{\mathrm{vac}}C& C\end{smallmatrix}\right)\left(\begin{smallmatrix}y_1\\y_2\end{smallmatrix}\right)\neq0$. There are two cases; either $y_1=0$ or $y_1\neq0$. 
\begin{itemize}
\item If $y_1=0$ then \eqref{contro} reduces to   $Ay_2=\lambda y_2$ and so the observability of $A$  tells us that $Cy_2\neq0$. Hence 
$\left(\begin{smallmatrix} -V_{\mathrm{vac}}C& C\end{smallmatrix}\right)\left(\begin{smallmatrix}0\\y_2\end{smallmatrix}\right)\neq0$. 
\item For $y_1\neq0$, the proof is a little trickier.  Suppose to the contrary that the system is not observable. That is, there exists a vector $\left(\begin{smallmatrix}y_1\\y_2\end{smallmatrix}\right)$ satisfying \eqref{contro} such that 
\begin{equation}\label{ps}
V_{\mathrm{vac}}Cy_1=Cy_2
\end{equation} 
Firstly, from 
 \eqref{contro} it is clear that  $-A^{\flat}y_1=\lambda y_1$, hence  $V_{\mathrm{vac}}Cy_1\neq0$ by global minimality (Lemma \ref{LEM1}). 
 We also have $C^{\flat}V_{\mathrm{vac}}Cy_1+Ay_2=\lambda y_2$  from \eqref{contro}, hence $-A^{\flat}y_2=\lambda y_2$ using \eqref{ps}. 
 On the other hand, letting  $y_2=\left(\begin{smallmatrix}u_1\\u_2\end{smallmatrix}\right)$, where $u_1, u_2$ are $n$ dimensional complex vectors, then by the doubled-up properties of $A^{\flat}$ it follows that $\left(\begin{smallmatrix}\overline{u}_2\\\overline{u}_1\end{smallmatrix}\right)$ is  also an eigenvector of $-A^{\flat}$ (with eigenvalue $\overline{\lambda}$).
 Therefore, $V_{\mathrm{vac}}C\left(\begin{smallmatrix}\overline{u}_2\\\overline{u}_1\end{smallmatrix}\right)\neq0$ by global minimality (Lemma \ref{LEM1}). Finally, this condition implies that $\overline{C_-}u_2+\overline{C_+}u_1\neq0$, which is a contradiction to \eqref{ps}.
  Hence the system is observable.




\end{itemize}

Showing controllability of $\left(\tilde{A}, \tilde{B}\right)$ can be achieved by similar means. Alternatively, we can use the dual properties of observability and controllability to show this. To this end, in order to show that $\left(\tilde{A}, \tilde{B}\right)$ is controllable it is enough to show that $\left(\tilde{B}^{\dag}, \tilde{A}^{\dag}\right)$ is observable \cite{zhou}.
In light of this, suppose that $\tilde{A}^{\dag}\left(\begin{smallmatrix}z_1\\z_2\end{smallmatrix}\right)=\lambda\left(\begin{smallmatrix}z_1\\z_2\end{smallmatrix}\right)$, which, by using the definition of $\tilde{A}$, is equivalent to 
\begin{align*}
-JAJz_1+C^{\dag}V_{\mathrm{vac}}CJz_2&=\lambda z_1 \quad\mathrm{and}\quad
A^{\dag}z_2=\lambda z_2.
\end{align*}
These equations can be written in matrix form as
%
$$
\tilde{A}\left(\begin{smallmatrix}Jz_2\\-Jz_1\end{smallmatrix}\right)=-\lambda\left(\begin{smallmatrix}Jz_2\\-Jz_1\end{smallmatrix}\right).
$$
Now, because $\left(\tilde{C}, \tilde{A}\right)$ is observable, it follows that
$$-C\left(Jz_1\right)-V_{\mathrm{vac}}C\left(Jz_2\right)\neq0.$$
This condition is equivalent to $\tilde{B}^{\dag}\left(\begin{smallmatrix}z_1\\z_2\end{smallmatrix}\right)\neq0$. 
\end{proof}


\subsection{Main result}

\begin{thm}\label{main}
Let $\left(C_1, \Omega_1\right)$ and $\left(C_2, \Omega_2\right)$ be two globally minimal  and stable QLSs for input $V_{\mathrm{vac}}$,
%
then
$$
\Psi_{1}(s)=\Psi_{2}(s) \,\, for~all~s \quad \Leftrightarrow \quad \Xi_1(s)=\Xi_2(s)\,\, for~all~s 
$$
\end{thm}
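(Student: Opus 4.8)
The reverse implication is immediate: since $\Psi_k(s)=\Xi_k(s)V_{\mathrm{vac}}\Xi_k(-\overline{s})^{\dag}$ by \eqref{powers}, equal transfer functions give equal power spectra. All the content is in the forward direction, and the plan is to transport it to classical state-space realisation theory through the cascade of Section \ref{god}. If $\Psi_1=\Psi_2$, then the two systems \eqref{cask} built from $(C_1,\Omega_1)$ and $(C_2,\Omega_2)$ share the transfer function $\Psi(s)J$. By global minimality and Theorem \ref{games} both $4n$-dimensional realisations are minimal, so the classical state-space isomorphism theorem \cite{zhou} supplies a \emph{unique} invertible $T$ with $\tilde A_2=T\tilde A_1T^{-1}$, $\tilde B_2=T\tilde B_1$ and $\tilde C_2=\tilde C_1T^{-1}$. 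Since each $\tilde A_k$ is proper LBT, Lemma \ref{hud} forces $T$ to be lower block triangular, $T=\left(\begin{smallmatrix}T_{11}&0\\T_{21}&T_{22}\end{smallmatrix}\right)$; this is the structural fact that makes the subsequent bookkeeping tractable.

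Next I would read off the block equations. The lower blocks of $\tilde C_2=\tilde C_1T^{-1}$ and the $(2,2)$ block of the $\tilde A$-intertwiner give $C_2=C_1T_{22}^{-1}$ and $A_2=T_{22}A_1T_{22}^{-1}$, while the upper blocks of $\tilde B_2=T\tilde B_1$ and the $(1,1)$ block give $C_2^{\flat}=T_{11}C_1^{\flat}$ and, after applying $\flat$ (using that it is an anti-involution and $(X^{-1})^{\flat}=(X^{\flat})^{-1}$), $A_2=\left(T_{11}^{\flat}\right)^{-1}A_1T_{11}^{\flat}$. Comparing the two expressions for $A_2$ and for $C_2$ shows that $M:=T_{11}^{\flat}T_{22}$ commutes with $A_1$ and satisfies $C_1M=C_1$; minimality makes $(C_1,A_1)$ observable, which forces $M=\mathds{1}_{2n}$, i.e. $T_{22}=\left(T_{11}^{\flat}\right)^{-1}$. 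Writing $S:=T_{11}^{\flat}=T_{22}^{-1}$, the whole similarity collapses to the single pair of relations $A_2=S^{-1}A_1S$, $C_2=C_1S$, and hence $\Xi_2(s)=\mathds{1}-C_1(s\mathds{1}_n-A_1)^{-1}\left(SS^{\flat}\right)C_1^{\flat}$.

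At this point the theorem is equivalent to the single identity $SS^{\flat}=\mathds{1}_{2n}$, i.e. to $S$ being $\flat$-unitary; its $\Delta$-structure then makes it symplectic and Theorem \ref{symplecticequivalence} yields $\Xi_1=\Xi_2$ at once. To establish $\flat$-unitarity I would feed in the two pieces of information not yet used. First, the physical-realisability identity $A_k+A_k^{\flat}+C_k^{\flat}C_k=0$ (which follows from $A_k=-\frac{1}{2}C_k^{\flat}C_k-iJ_n\Omega_k$ with $\Omega_k$ self-$\flat$-adjoint): substituting $A_2=S^{-1}A_1S$, $C_2=C_1S$ and eliminating $C_1^{\flat}C_1=-(A_1+A_1^{\flat})$ turns it into a quadratic constraint $A_1^{\flat}Q+QA_1+QC_1^{\flat}C_1Q=0$ on the self-$\flat$-adjoint matrix $Q:=\mathds{1}_{2n}-SS^{\flat}$. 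Second, the still-unused off-diagonal ($T_{21}$) block relations, which couple $Q$ to a Sylvester equation of the form $A_1Y+YA_1^{\flat}=\cdots$ whose operator is invertible because $A_1$, and hence $A_1^{\flat}$, is Hurwitz.

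The hard part will be exactly this last step. The realisability equation admits $Q=0$ but is not by itself decisive, since (as in classical spectral factorisation) its other self-$\flat$-adjoint solutions correspond to spurious spectral factors attached to different invariant subspaces of the associated Hamiltonian pencil. Ruling these out is where the hypotheses must act in concert: Hurwitz stability selects the stable invariant subspace, namely the $Q=0$ branch, while the off-diagonal cascade relations and the purity encoded in $V_{\mathrm{vac}}$ (through the global-minimality criterion of Lemma \ref{LEM1}) pin the similarity to that branch. I expect everything up to and including the reduction to the single block $S$ to be routine $\flat$-algebra, and the proof of $SS^{\flat}=\mathds{1}_{2n}$ to carry the real weight.
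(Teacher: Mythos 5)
Your setup is sound and follows the paper's own strategy up to a point: the cascade realisation \eqref{cask}, its minimality via Theorem \ref{games}, the unique classical similarity $T$, and its lower block triangularity via Lemma \ref{hud} are exactly the right ingredients. Your observation that $M:=T_{11}^{\flat}T_{22}$ commutes with $A_1$ and satisfies $C_1M=C_1$, so that ordinary observability of $(C_1,A_1)$ forces $M=\mathds{1}$, is correct and is in fact a cleaner route to $T_{22}=\left(T_{11}^{\flat}\right)^{-1}$ than the paper's manipulation of the full $4n$-dimensional observability matrix. The resulting reduction of the theorem to the single identity $SS^{\flat}=\mathds{1}_{2n}$ (equivalently $T_{11}^{\flat}T_{11}=\mathds{1}$) is also correct.

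The gap is that this identity is precisely where the entire content of the theorem sits, and you have not proved it. The Riccati equation $A_1^{\flat}Q+QA_1+QC_1^{\flat}C_1Q=0$ for $Q=\mathds{1}-SS^{\flat}$ is only a necessary condition: as you note yourself, it admits non-trivial self-$\flat$-adjoint solutions (supported on invariant subspaces of $A_1$ and $A_1^{\flat}$), and in the classical setting these correspond to genuinely distinct minimal spectral factors, so no argument using only the diagonal blocks and realisability can close the case. Your plan to bring in the off-diagonal ($T_{21}$) relations and global minimality is the right instinct, but it is stated as an expectation rather than carried out: you never write down the coupled equations, never show where Lemma \ref{LEM1} (the point at which purity of the input actually acts) excludes the spurious branches, and explicitly concede that establishing $SS^{\flat}=\mathds{1}_{2n}$ carries the real weight. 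In the paper this is Step 2 of the appendix proof: an inductive computation showing $\mathcal{O}\bigl(\begin{smallmatrix}T_4^{\flat}-T_1^{\flat}\\-T_3^{\flat}\end{smallmatrix}\bigr)=0$, which invokes the realisability condition $A+A^{\flat}+C^{\flat}C=0$ at every order $k$ of $\tilde{C}\tilde{A}^{k}$ together with the full rank of the cascade observability matrix, and it is by far the longest part of the argument. Until that step (or an equivalent) is supplied, the forward implication remains unproven.
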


\begin{proof}\label{sofa}
Firstly, by Theorem \ref{games} the system \eqref{cask} is minimal. Therefore, from the classical literature  transfer function equivalent systems are related via 
\begin{align}
\label{c1}&\tilde{A}'=T\tilde{A}T^{-1}, \quad\tilde{B}'=T\tilde{B}, \quad\tilde{C}'=\tilde{C}T^{-1}, \quad \tilde{D}'=\tilde{D}.
\end{align}
Moreover, its observability and controllability matrices, $\mathcal{O}$ and $\mathcal{C}$, will have full rank. 
Additionally, by Lemma \ref{hud} such a similarity transformation must be lower block triangular. 

Now writing $T$ as 
$$\left(\begin{smallmatrix} T_1 & 0\\ T_3 & T_4\end{smallmatrix}\right),$$ to complete the proof it remains to show that (a) $T_3=0$, (b) $T_1=T_4$, (b) $T_1^{\flat}T_1=1$ and (d) $T_1$ is doubled up. 
This is sufficient because it  tells us that  the equivalence classes of  the power spectrum are related via symplectic similarity transformations (and they are the same  gauge transformations  as those obtained  from the transfer function \cite{levitt}). 
The outline of how we show (a)-(d) is given in the following  three steps. The complete proof can be found in  \ref{p1}.

\begin{enumerate}
\item[1)] Firstly, using the pattern in the $\tilde{A}$, $\tilde{B}$ and $\tilde{C}$ matrices defined above, we show that the following holds:
$$\mathcal{O}=\mathcal{O}\left(\begin{smallmatrix} T_4^{\flat} &0\\ -T_3^{\flat}& T_1^{\flat}\end{smallmatrix}\right)T.$$ 
And so  because $\mathcal{O}$ has full rank, we have
$$\left(\begin{smallmatrix} T_4^{\flat} &0\\ -T_3^{\flat} &T_1^{\flat}\end{smallmatrix}\right)T=1.$$
\item[2)] We will then 
 show that:
$$\mathcal{O}\left(\begin{smallmatrix} T_4^{\flat}-T_1^{\flat}\\-T_3^{\flat}\end{smallmatrix}\right)=0.$$
This implies that $T_3=0$ and $T_1=T_4$.
\item[3)] Combing Steps 1) and 2) it is clear that $T$ must be of the form  $$T=\left(\begin{smallmatrix}T_1&0\\0&T_1\end{smallmatrix}\right)$$ with $T_1^{\flat}T_1=1.$
Finally we show that $T_1$ is doubled-up. 
\end{enumerate}
\end{proof}



\subsection{Identification method}\label{meth}

Suppose that we have constructed the power spectrum  from the input-output data, for instance by treating it as a transfer function and using one of the techniques of \cite{Ljung}. Here we outline a method to construct a globally minimal system realisation from the power spectrum.
 The realisation is obtained indirectly by first finding a non-physical realisation and then constructing a physical one from this by applying a criterion developed in \cite{zhou}. The construction is  similar to the one   used in \cite{levitt} for the transfer function realisation problem.

We have already seen many times that the power spectrum may be treated  as if it were a transfer function. Therefore, let $\left(\tilde{A}_0, \tilde{B}_0, \tilde{C}_0, V_{\mathrm{vac}}\right)$ constitute a minimal realisation of $\Psi(s)$, i.e., 
$$\Psi(s)J=V_{\mathrm{vac}}+ \tilde{C}_0\left(s-\tilde{A}_0\right)^{-1}\tilde{B}_0.$$
Further, let us  assume that $\tilde{A}_0, \tilde{B}_0, \tilde{C}_0$ are of the form 
$$\tilde{A}_0=\left(\begin{smallmatrix} -{A}^{\flat}_0&0\\0&{A}_0\end{smallmatrix}\right)\quad
\tilde{B}_0=\left(\begin{smallmatrix}B_1\\B_2\end{smallmatrix}\right) \quad
\tilde{C}_0=\left(\begin{smallmatrix}C_1\\C_2\end{smallmatrix}\right),$$
with, $A_0$, $B_1$ and $C_2$ doubled up and $A_0$ is stable. For example, in \ref{nog} such a realisation is found for an $n$-mode globally minimal system, with matrices $(A,C)$, possessing $2n$ distinct poles each with non-zero imaginary part

Now, by minimality, any other realisation of the transfer function can be generated by the similarity transformation 
\begin{equation}\label{create}\tilde{A}=T\tilde{A}_0T^{-1}\quad
\tilde{B}=T\tilde{B}_0\quad
\tilde{C}=\tilde{C}_0T^{-1}.\end{equation}
The problem here is that in general these matrices may not describe a genuine quantum system in the sense that from a given $\left(\tilde{A}, \tilde{B}, \tilde{C}\right)$ one cannot reconstruct the pair $(\Omega, C)$ describing the power spectrum. Our goal is to find a special transformation $T$ mapping $(\tilde{A}_0 , \tilde{B}_0 , \tilde{C}_0)$ to a triple $(\tilde{A}, \tilde{B}, \tilde{C})$ that is physical.

Firstly, as $\tilde{A}_0$ and the physical $\tilde{A}$ we seek are both proper LBT, then by Lemma 
\ref{hud} we may restrict  $T$ to be of the  form
$$T=\left(\begin{smallmatrix}T_1&0\\T_2&T_3\end{smallmatrix}\right), \quad T^{-1}=\left(\begin{smallmatrix}T^{-1}_1&0\\-T_3^{-1}T_2T_1^{-1}&T_3^{-1}\end{smallmatrix}\right).$$
 Using this together with \eqref{create} and    $(\tilde{A}, \tilde{B}, \tilde{C})$ in \eqref{cask} gives:
\begin{align}
\label{one}A^{\flat}=T_1A_0^{\flat}T_1^{-1}\\
\label{two}C^{\flat}V_{\mathrm{vac}}C=-T_2{A}^{\flat}_0T_1^{-1}-T_3A_0T_3^{-1}T_2T_1^{-1}\\
\label{three}A=T_3A_0T_3^{-1}\\
\label{four}-C^{\flat}=T_1B_1\\
\label{five}-C^{\flat}V_{\mathrm{vac}}=T_2B_1+T_3B_2\\
\label{six}-V_{\mathrm{vac}}C=C_1T_1^{-1}-C_2T_3^{-1}T_2T_1^{-1}\\
\label{seven}C=C_2T_3^{-1}.
\end{align}
For $(A, C)$ to correspond to a quantum system it must satisfy the physical realisability conditions:
$A+A^{\flat}+C^{\flat}C=0$  \cite{Indep}.
Applying this condition to \eqref{three} and \eqref{seven} and then again to \eqref{one} and \eqref{four} leads to the following equations:
\begin{align}
\label{eight}\left(T_3^{\flat}T_3\right)A_0+A_0^{\flat}\left(T_3^{\flat}T_3\right)+C_2^{\flat}C_2=0\\
\label{nine}
A_0^{\flat}\left(T_1^{\flat}T_1\right)^{-1}+\left(T_1^{\flat}T_1\right)^{-1}A_0+B_1B_1^{\flat}=0.
\end{align}
Next, as quantum system is  stable  $A_0$ must be  Hurwitz (because it is similar to $A$), therefore  \eqref{eight} and \eqref{nine} have unique solutions 
$\left(T_3^{\flat}T_3\right)$ and $\left(T_1^{\flat}T_1\right)^{-1}$ respectively (see   \cite{levitt} for the explicit form of these). Moreover, these solutions will necessarily be of doubled-up form due to the fact $A_0, B_1$ and $C_2$ were.
Therefore, using Lemma 1 in \cite{levitt} we can find doubled-up $T_1$ and $T_3$ from these uniquely (up to the non-identifiable symplectic equivalence class in Theorem \ref{main}). 

The upshot of these results is that we may ultimately write down a realisation of the system $(A, C)$ using \eqref{eight}  or alternatively from \eqref{nine}. By Theorem \ref{main} both solutions are guaranteed to coincide (bar any unidentifiable symplectic matrix) and give a unique (up to such a symplectic transformation) realisation of the power spectrum, hence we are done.

For completeness we may write down the unique solution $T_2$ given the solutions $T_1$ and $T_3$ so to obtain the full realisation \eqref{cask} of the power spectrum. To this end, 
suppose that the solutions $T_1$ and $T_2$ from \eqref{eight} and \eqref{nine} lead to (physical) realisations $(A, C)$ and $(\hat{A}, \hat{C})$ that differ by an (unidentifiable) symplectic. That is, $A=S\hat{A}S^{\flat}$ and $C=\hat{C}S^{\flat}$. Then from \eqref{two} we have
$$S\hat{C}^{\flat}V_{\mathrm{vac}}\hat{C}+\left(T_2T_1^{-1}S\right)\hat{A}^{\flat}+\hat{A}\left(T_2T_1^{-1}S\right)=0,$$
which has been obtained by substituting \eqref{one} and \eqref{three} into  \eqref{two}.
 This solution $\left(T_2T_1^{-1}S\right)$ can be found uniquely, hence $T_2$ can be found uniquely from this. Note that $T_2$ will not be of doubled-up type, which is to be expected.

\section{Outlook}

Our main result is that under  global minimal and pure stationary inputs  the power spectrum contains as much information as the transfer function, i.e., their classes of equivalent systems are the same in both functions.  Therefore, no information is lost by utilising stationary inputs rather than time-dependent inputs. As a corollary to these results it is not too difficult to prove a similar statement for the subset of passive systems. 

It would be interesting to understand whether these results would hold for mixed input states.  However, clearly   the equivalence between global minimality and mixedness of the stationary state from \cite{levitt} will not hold. Therefore, understanding whether or not a system is globally minimal for a mixed input requires further theory. Also, we could ask the same identifiability questions for the case of unknown inputs or allow for  static scattering or squeezing in the field. We intend to address these extensions in future works. 

Given that we now understand what is identifiable, the next step is to understand how well parameters can be estimated. In the time-dependent approach this has been done for passive systems in \cite{Guta} but no such work exists for active systems or in the stationary approach at all. As a side note, it should be possible to find the gauge transformations in the power spectrum that we found here as the  directions in phase space along which the  \textit{quantum Fisher information}\footnote{Recall that the Q.F.I gives a measure of the optimal estimation precision using the best measurement and estimator.} vanishes. 
Lastly, it would be interesting to consider these identifiability problems in the more realistic scenario of noisy QLSs. In a QLS noise may be modelled by the inclusion of additional channels that cannot be monitored. Understanding what can be identified here will likely be more challenging.

\appendix

\section{Proof of Lemma \ref{hud}}\label{goh}

 \begin{proof}
Firstly define $\{e_1, ..., e_{4n}\}$ as the canonical basis of $\mathbb{C}^{4n}$. 
%
%
By property \eqref{pil1} of  proper LBT matrices it is clear that $y^{(i)}:= \left(\begin{smallmatrix}0\\y_2^{(i)}\end{smallmatrix}\right)\in\mathrm{Span}\{e_{2n+1}, ..., e_{4n}\}$. Further, as there are $2n$ of them they must form a basis of $\mathrm{Span}\{e_{2n+1}, ..., e_{4n}\}$.
Suppose $y^{(i)}$ has generalised eigenvector rank $m_i$, then as
 as $\tilde{A}'=T\tilde{A}T^{-1}$ we have 
%
%
\begin{align*}
\left(\tilde{A}'-\lambda^{(i)}\right)^{m_i}  Ty^{(i)}&=\left(T\tilde{A}T^{-1}-\lambda^{(i)}\right)^{m_i}  Ty^{(i)}\\
&=T\left(\tilde{A}-\lambda^{(i)}\right)^{m_i}  y^{(i)}\\
&=0.
\end{align*}
Therefore, $Ty^{(i)}$ are generalised eigenvectors of $\tilde{A}'$ associated to $\lambda^{(i)}$. Hence, because $\tilde{A}'$ is also assumed to be proper LBT, it follows that $\mathrm{Span}\{Ty^{(i)}\}\subset\mathrm{Span}\{e_{2n+1}, ..., e_{4n}\}$. Finally, 
\begin{align*}
T\mathrm{Span}\{e_{2n+1}, ..., e_{4n}\}&=T\mathrm{Span}\{y^{(i)}\}\\
&=\mathrm{Span}\{Ty^{(i)}\}\\
&\subset\mathrm{Span}\{e_{2n+1}, ..., e_{4n}\}.
\end{align*}
The invertibility of $T$ has been used in getting from the first to the second line. 
This implies that $T$ is LBT, as required. 
\end{proof}

\section{Proof of Theorem \ref{main}}\label{p1}

As outlined in the proof sketch,  we need to show 1-3.

\subsection{Step 1:}\label{s1}
Firstly, the condition $\tilde{B}'=T\tilde{B}$ is equivalent to  
$$\left(\begin{smallmatrix}-C'^{\dag}V_{\mathrm{vac}}\\{C'}^{\dag}\end{smallmatrix}\right)=K\Sigma T\Sigma K \left(\begin{smallmatrix}-{C}^{\dag}V_{\mathrm{vac}}\\{C}^{\dag}\end{smallmatrix}\right),$$
where
$$K=\left(\begin{smallmatrix} J&0\\0&-J\end{smallmatrix}\right) \quad \mathrm{and} \quad \Sigma=\left(\begin{smallmatrix} 0&1\\1&0\end{smallmatrix}\right).$$ 
Hence 
$$\left(\begin{smallmatrix}-V_{\mathrm{vac}}\tilde{C'}&\tilde{C'}\end{smallmatrix}\right)=\left(\begin{smallmatrix}-V_{\mathrm{vac}}\tilde{C}&\tilde{C}\end{smallmatrix}\right)K\Sigma T^{\dag}\Sigma K.$$ 
Therefore, combining this the condition $\tilde{C}'=\tilde{C}T^{-1}$ we have
\begin{equation}\label{in10}
\tilde{C}=\tilde{C}K\Sigma T^{\dag}\Sigma KT.
\end{equation}
Now, 
\begin{align}
\nonumber \tilde{A}'&=-K\Sigma\tilde{A}'^{\dag}\Sigma K\\
\nonumber&=-K\Sigma \left(T^{\dag}\right)^{-1} \tilde{A}^{\dag}T^{\dag}\Sigma K\\
\label{as}&=K\Sigma  \left(T^{\dag}\right)^{-1} \Sigma K \tilde{A}K\Sigma T^{\dag}\Sigma K,
\end{align}
where $\tilde{A}'=T\tilde{A}T^{-1}$ has been used to obtain the second line.

%
%
And so 
\begin{align*}
\tilde{C} \tilde{A}T^{-1}&=\tilde{C}' \tilde{A}'\\
&=\tilde{C}'K\Sigma  \left(T^{\dag}\right)^{-1} \Sigma K \tilde{A}K\Sigma T^{\dag}\Sigma K\\
&=\left(\tilde{C}T^{-1}K\Sigma  \left(T^{\dag}\right)^{-1} \Sigma K\right) \tilde{A}K\Sigma T^{\dag}\Sigma K\\
&=\tilde{C} \tilde{A}K\Sigma T^{\dag}\Sigma K,
\end{align*}
where  \eqref{in10} has been used to obtain the fourth line.
Thus
\begin{equation}\label{in2}
\tilde{C} \tilde{A}=\left(\tilde{C} \tilde{A}\right)K\Sigma T^{\dag}\Sigma KT.
\end{equation}

\begin{Claim}
%
\begin{equation}
\tilde{C} \tilde{A}^k=\left(\tilde{C} \tilde{A}^k\right)K\Sigma T^{\dag}\Sigma KT.
\end{equation}
for all $k\geq0$. 
\end{Claim}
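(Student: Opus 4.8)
The plan is to prove the Claim by induction on $k$, the crux being to recognise that the matrix $P:=K\Sigma T^{\dag}\Sigma KT$ commutes with $\tilde{A}$. Once that is established the induction is a one-line computation.

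The base case $k=0$ is exactly equation \eqref{in10}, which can be read as $\tilde{C}=\tilde{C}P$, and the case $k=1$ is equation \eqref{in2}. To obtain the commutation property I would first rewrite \eqref{as}. Setting $R:=K\Sigma T^{\dag}\Sigma K$ and using that $K$ and $\Sigma$ are involutions, a direct check gives $R\,\big(K\Sigma(T^{\dag})^{-1}\Sigma K\big)=\mathds{1}$, so that $K\Sigma(T^{\dag})^{-1}\Sigma K=R^{-1}$ and \eqref{as} becomes $\tilde{A}'=R^{-1}\tilde{A}R$. Comparing this with the similarity relation $\tilde{A}'=T\tilde{A}T^{-1}$ and rearranging yields $RT\tilde{A}=\tilde{A}RT$, i.e. $P\tilde{A}=\tilde{A}P$ since $P=RT$.

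With this in hand I would close the induction by assuming $\tilde{C}\tilde{A}^k=\left(\tilde{C}\tilde{A}^k\right)P$, multiplying on the right by $\tilde{A}$, and commuting $P$ past $\tilde{A}$:
\begin{equation*}
\tilde{C}\tilde{A}^{k+1}=\left(\tilde{C}\tilde{A}^kP\right)\tilde{A}=\tilde{C}\tilde{A}^k\left(P\tilde{A}\right)=\tilde{C}\tilde{A}^k\left(\tilde{A}P\right)=\left(\tilde{C}\tilde{A}^{k+1}\right)P,
\end{equation*}
which is precisely the claim at $k+1$.

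The step requiring the most care is establishing $P\tilde{A}=\tilde{A}P$, as it rests on the structural identity \eqref{as}, which in turn encodes the symmetry $\tilde{A}=-K\Sigma\tilde{A}^{\dag}\Sigma K$ of the doubled-up cascade matrix; here one must keep careful track of the (anti)commutation behaviour of the involutions $K$ and $\Sigma$. Beyond this bookkeeping the argument is a routine induction with no genuine obstacle, and indeed the inductive step is nothing more than the manipulation that already produced \eqref{in2} from \eqref{in10}, specialised to $k=1$ and then iterated.
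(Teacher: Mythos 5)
Your proof is correct and rests on the same ingredients as the paper's own argument: the base cases \eqref{in10}--\eqref{in2}, the structural identity \eqref{as}, and the similarity relation \eqref{c1}, combined by induction. The only difference is organisational — you extract the commutation identity $\left(K\Sigma T^{\dag}\Sigma K T\right)\tilde{A}=\tilde{A}\left(K\Sigma T^{\dag}\Sigma K T\right)$ once and for all by equating \eqref{as} with $\tilde{A}'=T\tilde{A}T^{-1}$, whereas the paper re-derives the same cancellation inside each inductive step by passing through the primed system; the two are mathematically equivalent.
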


\begin{proof}
We  prove this by induction. Note that we already know it to be true for $k=0$ and $k=1$ (see \eqref{in10} and \eqref{in2}). To this end, suppose that it is true for $k-1$. Therefore,
\begin{align*}
\tilde{C}'\tilde{A}'^k&=\tilde{C}'\left(\tilde{A}'\right)^{k-1}\tilde{A}'\\
&=\tilde{C}'\left(\tilde{A}'\right)^{k-1}K\Sigma  \left(T^{\dag}\right)^{-1} \Sigma K \tilde{A}K\Sigma T^{\dag}\Sigma K\\
&=\left(\tilde{C}\tilde{A}^{k-1}T^{-1}K\Sigma  \left(T^{\dag}\right)^{-1} \Sigma K \right)\tilde{A}K\Sigma T^{\dag}\Sigma K\\
&=\tilde{C}\tilde{A}^{k}K\Sigma T^{\dag}\Sigma K.
\end{align*}
by using a combination of \eqref{as} and \eqref{c1}. Finally, using the observation $\tilde{C}'\tilde{A}'^k=\tilde{C}\tilde{A}^kT^{-1}$ completes the proof. %
\end{proof}
Finally, following this claim we  have:
\begin{align*}
\mathcal{O}&=\mathcal{O}K\Sigma T^{\dag}\Sigma KT
\\&=\mathcal{O}\left(\begin{smallmatrix} T_4^{\flat} &0\\ -T_3^{\flat}& T_1^{\flat}\end{smallmatrix}\right)T.
\end{align*}

 %
%

\subsection{Step 2:}
For this step it is sufficient to prove the following claim. 

\begin{Claim}
$$\tilde{C}\tilde{A}^k\left(\begin{smallmatrix}T_4^{\flat}-T_1^{\flat}\\-T_3^{\flat}\end{smallmatrix}\right)=0$$
for all $k=0,1,2,...$.
\end{Claim}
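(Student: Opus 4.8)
The plan is to prove the claim by induction on $k$ and then read off the conclusion: once $\tilde{C}\tilde{A}^k w=0$ holds for every $k$, with $w:=\left(\begin{smallmatrix}T_4^{\flat}-T_1^{\flat}\\-T_3^{\flat}\end{smallmatrix}\right)$, stacking these relations gives $\mathcal{O}w=0$, and since $\mathcal{O}$ has full column rank (the system \eqref{cask} is minimal by Theorem \ref{games}) we obtain $w=0$, i.e. $T_4^{\flat}=T_1^{\flat}$ and $T_3^{\flat}=0$; applying $\flat$ yields exactly $T_1=T_4$ and $T_3=0$, which is the content of Step 2.

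First I would record the block form $\tilde{C}\tilde{A}^k=\left(\begin{smallmatrix}X_k & CA^k\end{smallmatrix}\right)$, where $X_0=-V_{\mathrm{vac}}C$ and $X_{k+1}=-X_kA^{\flat}+CA^kC^{\flat}V_{\mathrm{vac}}C$; this follows directly from the lower-block-triangular shape of $\tilde{A}$ in \eqref{cask}. The base case $k=0$ is the identity $\tilde{C}w=-V_{\mathrm{vac}}C(T_4^{\flat}-T_1^{\flat})-CT_3^{\flat}=0$. I would derive this from the $k=0$ instance of the similarity relations, namely $\tilde{C}'=\tilde{C}T^{-1}=\tilde{C}P$ with $P=\left(\begin{smallmatrix}T_4^{\flat}&0\\-T_3^{\flat}&T_1^{\flat}\end{smallmatrix}\right)=T^{-1}$ from Step 1: matching $\tilde{C}'$ and $\tilde{C}P$ blockwise (both the primed and unprimed systems have the cascade form $(-V_{\mathrm{vac}}C,\,C)$) gives $C'=CT_1^{\flat}$ together with the key relation $(\star)$: $V_{\mathrm{vac}}C(T_4^{\flat}-T_1^{\flat})+CT_3^{\flat}=0$, which is precisely $\tilde{C}w=0$.

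The engine of the inductive step is the identity $\tilde{A}w=-\,\mathrm{diag}(A^{\flat},A^{\flat})\,w$. I would prove it by substituting $(\star)$ into the lower block of $\tilde{A}w$ and then using the physical realisability relation $A+A^{\flat}+C^{\flat}C=0$ to collapse the cross-coupling block $C^{\flat}V_{\mathrm{vac}}C$; the upper block is immediate. Granting the analogous relation at each order (a strong induction whose bookkeeping is controlled by the lower-order cases of the claim), this iterates to $\tilde{A}^k w=(-1)^k\,\mathrm{diag}((A^{\flat})^k,(A^{\flat})^k)\,w$, so that $\tilde{C}\tilde{A}^k w=(-1)^{k+1}\big(V_{\mathrm{vac}}C(A^{\flat})^k(T_4^{\flat}-T_1^{\flat})+C(A^{\flat})^kT_3^{\flat}\big)$. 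The claim thus becomes the family of $(A^{\flat})^k$-weighted copies of $(\star)$.

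The main obstacle is establishing these weighted identities for all $k$, i.e. propagating $(\star)$ through the recurring coupling term $C^{\flat}V_{\mathrm{vac}}C$. This cannot be achieved from the unprimed realisability alone; I would also feed in the realisability of the transformed system, $A'+A'^{\flat}+C'^{\flat}C'=0$ with $A'=T_4AT_1^{\flat}$ and $C'=CT_1^{\flat}$ (both extracted from the similarity relations together with $P=T^{-1}$). Combined with the unprimed condition, these two realisability relations constrain $T_1^{\flat}T_1$ relative to $A$ and $C$, and it is this interplay -- together with the purity of the input, which enters through $V_{\mathrm{vac}}^{\flat}=V_{\mathrm{vac}}$ -- that forces the weighted $(\star)$ to hold. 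I expect this reconciliation of the two realisability conditions to be the crux of the argument; it is exactly the rigidity that would fail for a generic classical cascade, consistent with the paper's remark that the phenomenon is special to unitary, pure-input quantum systems.
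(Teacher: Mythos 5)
Your base case is correct, and the identity $\tilde{A}w=-\mathrm{diag}(A^{\flat},A^{\flat})\,w$ (which does follow from $\tilde{C}w=0$ together with $A+A^{\flat}+C^{\flat}C=0$ applied to the lower block) is a genuinely nice structural observation that the paper does not make explicit. But your induction does not close. Granting the diagonal form $\tilde{A}^{k}w=(-1)^{k}\mathrm{diag}\big((A^{\flat})^{k},(A^{\flat})^{k}\big)w$, the statement $\tilde{C}\tilde{A}^{k}w=0$ is exactly the weighted identity $V_{\mathrm{vac}}C(A^{\flat})^{k}(T_4^{\flat}-T_1^{\flat})+C(A^{\flat})^{k}T_3^{\flat}=0$, and the diagonal form at order $k$ is obtained precisely from the weighted identity at order $k-1$. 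So the inductive hypothesis buys you the diagonal form at order $k$ but leaves the order-$k$ weighted identity --- which is the claim itself at order $k$ --- unproven: the scheme reduces the claim to an equivalent restatement of itself. You acknowledge this (`the main obstacle') and gesture at reconciling the two physical-realisability conditions, but that is not the operative mechanism and you do not carry it out, so the central step is missing.

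The missing input, which the paper's proof supplies, is the equality of the Markov parameters of the two cascade realisations: since both realise the same power spectrum $\Psi(s)J$, one has $\tilde{C}'\tilde{A}'^{j}\tilde{B}'=\tilde{C}\tilde{A}^{j}\tilde{B}$ for all $j$. The paper writes $\tilde{C}\tilde{A}^{k}w=H'-HT_1^{\flat}$ for explicit block combinations $H,H'$, and then by a telescoping computation that repeatedly substitutes $A+A^{\flat}=-C^{\flat}C$ (for both the primed and unprimed systems) collapses $H$ and $H'$ to $-\sum_{j=0}^{k-1}\tilde{C}\tilde{A}^{j}\tilde{B}\,CA^{k-1-j}$ and $-\sum_{j=0}^{k-1}\tilde{C}'\tilde{A}'^{j}\tilde{B}'\,C'A'^{k-1-j}$ respectively; the Markov-parameter equality together with $C'A'^{m}=CA^{m}T_1^{\flat}$ then yields $H'=HT_1^{\flat}$ and hence the claim. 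Without bringing the input matrix $\tilde{B}$ into play --- i.e.\ without using that the two systems share the same transfer function rather than merely the same $(\tilde{A},\tilde{C})$ structure --- the weighted identities cannot be established, so as it stands your argument has a genuine gap.
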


\begin{proof}
Using the results of \ref{s1} we know that equivalent systems   are related via
\begin{equation}\label{start}
\tilde{C'}\tilde{A'}^k=\tilde{C}\tilde{A}^k\left(\begin{smallmatrix} T_4^{\flat}&0\\-T^{\flat}_3&T_1^{\flat}\end{smallmatrix}\right).
\end{equation} 
Also     note that the condition $C'A'^k=CA^kT^b_1$ holds.

We first see this result for $k=0$. Equation \eqref{start} for $k=0$ reads 
$$\left(\begin{smallmatrix}-V_{\mathrm{vac}}C', C'\end{smallmatrix}\right)=\left(\begin{smallmatrix}-V_{\mathrm{vac}}CT_4^{\flat}-CT_3^{\flat}, CT_1^{\flat}\end{smallmatrix}\right).$$ 
Therefore, adding the first entry to $V_{\mathrm{vac}}$ times the second entry:
$$0=-V_{\mathrm{vac}}C\left(T_4^{\flat}-T_1^{\flat}\right)+C\left(-T_3^{\flat}\right),$$
which shows the result for $k=0$.

The result for $k\in\mathbb{N}$ goes along the same lines, but is a little more involved. Firstly, observe that $\tilde{A}^{k}$ may be written as
$$\tilde{A}^k=\left(\begin{smallmatrix}  \left(-A^{\flat}\right)^k &0\\ e_k &A^k\end{smallmatrix}\right),$$
where $e_k=A^0C^{\flat}V_{\mathrm{vac}}C\left(-A^{\flat}\right)^{k-1}+...+A^{k-1}C^{\flat}V_{\mathrm{vac}}C\left(-A^{\flat}\right)^{0}$ (and similarly for the primed matrices).
Now, from \eqref{start} we have 
\begin{align}
\nonumber&\left(\begin{smallmatrix} -V_{\mathrm{vac}}C'\left(-A'^{\flat}\right)^k+C'A'^{k-1}C'^{\flat}V_{\mathrm{vac}}C'-C'e'_{k-1}A'^{\flat},&C'A'^k\end{smallmatrix}\right)\\
&\label{combo}=\left(-\begin{smallmatrix} V_{\mathrm{vac}}C\left(-A^{\flat}\right)^kT_4^b+CA^{k-1}C^{\flat}V_{\mathrm{vac}}CT_4^{\flat}-Ce_{k-1}A^{\flat}T_4^{\flat}-CA^kT_3^{\flat},&CA^kT_1^{\flat}\end{smallmatrix}\right). 
\end{align}
Again adding the first block to $V_{\mathrm{vac}}$ times the second block gives 
\begin{equation}\label{pod}
H'=\tilde{C}\tilde{A}^k\left(\begin{smallmatrix} T_4^{\flat}-T_1^{\flat}\\
-T^{\flat}_3 \end{smallmatrix}\right)+HT_1^b,
\end{equation}
where
\begin{align*}
H:&=-V_{\mathrm{vac}}C\left(-A^{\flat}\right)^k+CA^{k-1}C^{\flat}V_{\mathrm{vac}}C-Ce_{k-1}A^{\flat}+V_{\mathrm{vac}}CA^k\\
H':&=-V_{\mathrm{vac}}C'\left(-A'^{\flat}\right)^k+C'A'^{k-1}C'^{\flat}V_{\mathrm{vac}}C'-C'e'_{k-1}A'^{\flat}+V_{\mathrm{vac}}C'A'^k
\end{align*}

Now, observe that
\begin{align}
\nonumber H'&=V_{\mathrm{vac}}C'A'^k+\left(V_{\mathrm{vac}}C'\left(-A'^{\flat}\right)^{k-1}-C'e'_{k-1}\right)A'^{\flat}+C'A'^{k-1}C'^{\flat}V_{\mathrm{vac}}C'\\
\nonumber&=V_{\mathrm{vac}}C'A'^k+\left(V_{\mathrm{vac}}C'\left(-A'^{\flat}\right)^{k-1}-C'e'_{k-1}\right)\left(-A'-C'^{\flat}C'\right)\\
\nonumber&+C'A'^{k-1}C'^{\flat}V_{\mathrm{vac}}C'\\
\nonumber&=V_{\mathrm{vac}}C'A'^k+\left(-V_{\mathrm{vac}}C'\left(-A'^{\flat}\right)^{k-1}+C'e'_{k-1}   \right)A'
\\\nonumber& +  \left(-V_{\mathrm{vac}}C'\left(-A'^{\flat}\right)^{k-1}C'^{\flat}+C'e'_{k-1}C'^{\flat}+ C'A'^{k-1}C'^{\flat}V_{\mathrm{vac}}    \right)C'    \\
&\label{roff}=V_{\mathrm{vac}}C'A'^k+G'_{k-1}A'-\tilde{C}'\left(\tilde{A}'\right)^{k-1}\tilde{B}'C',
\end{align}
where $G'_k:=-V_{\mathrm{vac}}C'\left(-A'^{\flat}\right)^{k}+C'e'_{k}$. Here we have used the realisability condition $A+A^{\flat}+C^{\flat}C=0$ on the second line and then rearranged. 

Now, let us obtain a recursive expression for $G_k$. Firstly, using the definition of $e_k$ and the substitution $A'+A'^{\flat}+C'^{\flat}C'=0$:

\begin{align*} 
G'_k&=-V_{\mathrm{vac}}C'\left(-A'^{\flat}\right)^{k}   +    \sum^{k-1}_{j=0} C'A'^{k-1-j}C'^{\flat}V_{\mathrm{vac}}C'\left(-A'^{\flat}\right)^{j} 
\\
&=-V_{\mathrm{vac}}C'\left(-A'^{\flat}\right)^{k-1}\left(A'+C'^{\flat}C'\right)
\\&+  \sum^{k-1}_{j=1}  C'A'^{k-1-j}C'^{\flat}V_{\mathrm{vac}}C'\left(-A'^{\flat}\right)^{j-1} \left(A'+C'^{\flat}C'\right)   \\
&            +
C'A'^{k-1}C'^{\flat}V_{\mathrm{vac}}C'\left(-A'^{\flat}\right)^{0}
\end{align*}
Rearranging this and using the definition of $e_k$ again we obtain 
\begin{align*}
G'_k&=\left(   -V_{\mathrm{vac}}C'\left(-A'^{\flat}\right)^{k-1}C'^{\flat}+ C'A'^{k-1}C'^{\flat}V_{\mathrm{vac}}\right.\\
&\left.+C'\left[ \sum^{k-2}_{j=0}   A'^{k-2-j}C'^{\flat}V_{\mathrm{vac}}C'\left(-A'^{\flat}\right)^{j}
   \right]C'^{\flat}      \right)C'\\
&+\left(   -V_{\mathrm{vac}}C'\left(-A'^{\flat}\right)^{k-1}  +C'\sum^{k-2}_{j=0}  A'^{k-2-j}C'^{\flat}V_{\mathrm{vac}}C'\left(-A'^{\flat}\right)^{j}\right)A'\\
&
=\left( -V_{\mathrm{vac}}C'\left(-A'^{\flat}\right)^{k-1}C'^{\flat}+ C'A'^{k-1}C'^{\flat}V_{\mathrm{vac}}+C'e'_{k-1}C'^{\flat}\right)C'\\
&+\left( -V_{\mathrm{vac}}C'\left(-A'^{\flat}\right)^{k-1} +C'e'_{k-1}\right)A'\\
&=-\tilde{C}'\tilde{A}'^{k-1}\tilde{B}'C'+G_{k-1}A'.
\end{align*}
%
%
%
Also note that 
\begin{align*}
G_1&=V_{\mathrm{vac}}C'A'^{\flat}+C'C'^{\flat}V_{\mathrm{vac}}C'\\
&=-V_{\mathrm{vac}}C'A'+\left(-V_{\mathrm{vac}}C'C'^{\flat}+C'C'^{\flat}V_{\mathrm{vac}}\right)C'\\
&=-V_{\mathrm{vac}}C'A'-\tilde{C}'\tilde{B}'C'.
\end{align*}

Using our recursive expression for $G_k$, and continuing on from \eqref{roff} we have 
\begin{align}\label{nun}
\nonumber H'=&V_{\mathrm{vac}}C'A'^k-\tilde{C}'\tilde{A}'^{k-1}\tilde{B}'C'+G'_{k-1}A'\\
&=V_{\mathrm{vac}}C'A'^k-\tilde{C}'\tilde{A}'^{k-1}\tilde{B}'C'-\tilde{C}'\tilde{A}'^{k-2}\tilde{B}'C'A'+G_{k-2}\tilde{A}'^2\nonumber\\
&\,\,\vdots \quad \quad\quad \quad\quad   \vdots \quad\quad\quad\quad\quad \vdots\quad \quad\quad \quad\quad   \vdots\nonumber\\
&=V_{\mathrm{vac}}C'A'^k-\sum^{k-1}_{j=1}\tilde{C}'\tilde{A}'^{j}\tilde{B}'C'A'^{k-1-j}+G_1A'^{k-1}\nonumber\\
&=V_{\mathrm{vac}}C'A'^k-  \sum^{k-1}_{j=0}\tilde{C}'\tilde{A}'^{j}\tilde{B}'C'A'^{k-1-j}                -V_{\mathrm{vac}}C'A'^k\nonumber\\
&=-\sum^{k-1}_{j=0}\tilde{C}'\tilde{A}'^{j}\tilde{B}'C'A'^{k-1-j}         .
\end{align}
%
%
%
%
%
Furthermore, as $\tilde{C}'\tilde{A}'^{k-1}\tilde{B}'=\tilde{C}\tilde{A}^{k-1}\tilde{B}$ for all $k$ and $C'A'^k=CA^kT^b_1$,  then we may conclude that 
\begin{align}\label{compare1}
H'=-  \left(   \sum^{k-1}_{j=0}\tilde{C}'\tilde{A}'^{j}\tilde{B}'C'A'^{k-1-j}                    \right)T_1^{\flat}.
\end{align}

On the other hand, by using an identical argument to above, 
\begin{equation}\label{compare2}
H=-\sum^{k-1}_{j=0}\tilde{C}\tilde{A}^{j}\tilde{B}CA^{k-1-j}.      
\end{equation}  
Therefore, using \eqref{compare1} and \eqref{compare2} in \eqref{pod} completes the proof. 
\end{proof}

\subsection{Step 3} To show that the system is doubled-up we use the observability of the quantum system. Observe that $C_1A_1^k$, $C_2A_2^k$ must be of the of this doubled up form for $k\in\{0,1,2,...\}$. Writing $C_1A_1^k$, $C_2A_2^k$ and $T_1$ as $\left(\begin{smallmatrix} P_{(k)} &Q_{(k)}\\\overline{Q}_{(k)}&\overline{P}_{(k)}\end{smallmatrix}\right)$, $\left(\begin{smallmatrix} P'_{(k)} &Q'_{(k)}\\\overline{Q}_{(k)}'&\overline{P}_{(k)}'\end{smallmatrix}\right)$ and $T_1=  \left(\begin{smallmatrix} S_1 &S_2\\S_3&S_4\end{smallmatrix}\right)$, and using the result, $C_1A_1^k= C_2A_2^k T_1^\flat $, 
it follows that 
\[P_{(k)}(S_1^{\dag}-S_4^T)+Q_{(k)}(S_3^T-S_2^{\dag})=0\]
\[\overline{Q}_{(k)}(S_1^{\dag}-S_4^T)+\overline{P}_{(k)}(S_3^T-S_2^{\dag})=0.\]
Hence 
\[\mathcal{O}\left[\begin{smallmatrix}    S^{\dag}_1-S_4^T\\  S_3^T-S_2^{\dag}\end{smallmatrix}\right]=0\] and by using the fact that $\mathcal{O}$ is full rank implies that 
\[T_1=\left(\begin{smallmatrix} S_1 &S_2\\\overline{S}_2&\overline{S}_1\end{smallmatrix}\right).\] 

\section{Finding a classical realisation of the power spectrum for Section \ref{meth}}\label{nog}

We assume that the matrix  $A$ for   the $n$-mode minimal system, $(A, C)$, possesses  $2n$ distinct eigenvalues each with non-zero imaginary part. This requirement  can be seen to be generic in the space of all quantum systems \cite{cascade}. 

Firstly, observe that if  $\lambda_i$ is a complex eigenvalue of $A$ with right eigenvector $\left(\begin{smallmatrix}R_i\\S_i\end{smallmatrix}\right)$ and left eigenvector  $\left(U_i, V_i\right)$, then $\overline{\lambda}_i$ also an eigenvalue with right eigenvector $\left(\begin{smallmatrix}\overline{S}_i\\\overline{R}_i\end{smallmatrix}\right)=\Sigma\overline{\left(\begin{smallmatrix}R_i\\S_i\end{smallmatrix}\right)}$ and left eigenvector $   \left(\overline{V}_i, \overline{U}_i\right)= \overline{\left(U_i, V_i\right)}\Sigma_n$, where     $R_i, S_i\in\mathbb{C}^{1\times n}$, $U_i, V_i\in\mathbb{C}^{n\times1}$ and $\Sigma_n:=\left(\begin{smallmatrix} 0_n&1_n\\1_n&0_n\end{smallmatrix}\right)$. 
This property follows from the fact that $A$ has the doubled-up form  $A:=\Delta\left(A_{-}, A_{+}\right)$.
Furthermore, from the  system  \eqref{cask} $\tilde{A}$ may be diagonalised as $\tilde{A}=P\tilde{A}_0P^{-1}$ where 
$$\tilde{A}_0=\left(\begin{smallmatrix}-{A}^{\flat}_0&0\\0&A_0\end{smallmatrix}\right)$$
and $A_0$ is diagonal and doubled-up. Here $P$ and $P^{-1}$ are lower block triangular (Lemma \ref{hud})
written as 
$$P=\left(\begin{smallmatrix} P_1&0\\P_2&P_3\end{smallmatrix}\right) \quad \mathrm{and}\quad
P^{-1}=\left(\begin{smallmatrix} P^{-1}_1&0\\-P_3^{-1}P_2P_1^{-1}&P^{-1}_3\end{smallmatrix}\right),$$
where 
$$P_3=\left(\begin{smallmatrix}R_1&\hdots &R_n&S_1&\hdots&S_n\\ \overline{S}_1&\hdots&\overline{S}_n&\overline{R}_1&\hdots&\overline{R}_n\end{smallmatrix}\right) \quad\mathrm{and}\quad
P^{-1}_1=\left(\begin{smallmatrix}U_1&V_1\\\vdots&\vdots\\
U_n&V_n\\
 \overline{V}_1&\overline{U}_1\\
 \vdots&\vdots\\
 \overline{V}_n&\overline{U}_n\end{smallmatrix}\right).
$$
Hence, the power spectrum, $\Psi(s)J$, of \eqref{cask} may be written 
\begin{align}
\label{gut}
V_{\mathrm{vac}}-\left(-V_{\mathrm{vac}}CP_1+CP_2, CP_3\right)\left(\begin{smallmatrix}s+A^{\flat}_0&0\\0 &s-A_0\end{smallmatrix}\right)\left(\begin{smallmatrix}P_1^{-1}C^{\flat}\\-P_3^{-1}P_2P_1^{-1}C^{\flat}+ P_3^{-1}C^{\flat}V_{\mathrm{vac}}\end{smallmatrix}\right).
\end{align}

We can construct a minimal realisation called \textit{Gilbert's realisation} \cite{zhou} by expanding as partial fractions:
\begin{equation}\label{gut1}
\Psi(s)J=V_{\mathrm{vac}}+\sum_{i=1}^n \frac{I_i}{(s+\overline{\lambda}_i)} +\frac{K_i}{(s+\lambda_i)}
+\frac{T_i}{(s-\lambda_i)}+\frac{W_i}{(s-\overline{\lambda}_i)},
\end{equation}
with $\mathrm{Re}(\lambda_i)<0$.
The matrices $I_i, K_i, T_i, W_i$ are necessarily rank-one. Therefore there exist matrices $B_{1,i}, B_{2,i}, B'_{1,i}, B'_{2,i}\in\mathbb{C}^{1\times 2m}$ and $C_{1,i}, C_{2,i}, C'_{1,i}, C'_{2,i}\in\mathbb{C}^{2m\times 1}$  such that 
\begin{equation*}
C_{1,i}B_{1,i}=I_i, C'_{1,i}B'_{1,i}=K_i\quad \,\,\,\mathrm{and}\,\,\, C_{2,i}B_{2,i}=T_i, \quad C'_{2,i}B'_{2,i}=W_i
\end{equation*}
and are each uniquely determined from $I_i, K_i, T_i, W_i$ up to a constant\footnote{For example $\frac{1}{\nu}C_{1,i}$ and $\nu B_{1,i}$ are also solutions to $I_i$, where $\nu$ is a constant.}.
The Gilbert realisation $\tilde{A}_0, \tilde{B}_0, \tilde{C}_0$ is 
\begin{equation*}
\tilde{A}_0:=\mathrm{diag}\left(-\overline{\lambda}_1, ..., -\overline{\lambda}_n, -\lambda_1, ..., -\lambda_n, \lambda_1, ..., \lambda_n,\overline{\lambda}_1,..., \overline{\lambda}_n\right),
\end{equation*}
\begin{equation*}
\tilde{B}_0:=\left[\begin{smallmatrix}B_1\\B_2\end{smallmatrix}\right], \quad 
\tilde{C}_0:= 
\left[C_1, C_2\right]
\end{equation*}
where 
\begin{align*}
& B_1:=\left[\begin{smallmatrix}B_{1,1}\\\vdots\\B_{1,n}\\B'_{1,1}\\\vdots\\B'_{1,n}\end{smallmatrix}\right] 
\quad 
B_2:=\left[\begin{smallmatrix}B_{2,1}\\\vdots\\B_{2,n}\\B'_{1,1}\\\vdots\\B'_{1,n}\end{smallmatrix}\right],   \\
 &C_1:=\left[\begin{smallmatrix}C_{1,1}&\hdots&C_{1,n}&C'_{1,1}&\hdots&C'_{1,n}\end{smallmatrix}\right], \\& C_2:=\left[\begin{smallmatrix}C_{2,1}&\hdots&C_{2,n}&C'_{2,1}&\hdots&C'_{2,n}\end{smallmatrix}\right] .
\end{align*}
At the moment this Gilbert realisation doesn't satisfy the properties required by Section \ref{meth}, i.e., $B_1$ and $C_2$ are not doubled-up. We can take care of this in the following way. 
Firstly, in this realisation  $I_i$ is equal to the $i^{\mathrm{th}}$  column of $\left(-V_{\mathrm{vac}}CP_1+CP_2\right)$ multiplied by the $i^{\mathrm{th}}$  row of $P_1^{-1}C^{\flat}$ and $K_i$ is equal to the $(n+i)^{\mathrm{th}}$ column of $\left(-V_{\mathrm{vac}}CP_1+CP_2\right)$ multiplied by the $(n+i)^{\mathrm{th}}$ row of $P_1^{-1}C^{\flat}$ (see \eqref{gut}).  Therefore,  the $i^{\mathrm{th}}$ row of $B_1$ differs from the $i^{\mathrm{th}}$ row of the doubled-up matrix $P_1^{-1}C^{\flat}$ by an (unknown) multiplicative constant.  
Finally, by   multiplying the rows of $B_1$ in our Gilbert realisation by suitable constants (and hence multiplying the corresponding columns of $C_1$ by the inverse of these constants so that the power spectrum remains unchanged) we can obtain a doubled-up $B_1$. A similar technique may be used to obtain a doubled-up $C_2$ by using the fact that $CP_3$ is doubled-up.



\bibliographystyle{elsarticle-harv}
\bibliography{references.bib} 

\end{document}